\documentclass[11pt]{article}

\usepackage{microtype}
\usepackage{fullpage}

\usepackage{comment}
\usepackage{enumerate}

\usepackage{makeidx}  

\usepackage{amsthm,amsmath,amstext,amssymb}
\usepackage{xspace}
\usepackage{tikz}
 \usepackage{subcaption}
 \captionsetup[subfigure]{labelformat=empty,justification=centering}

\newcommand{\F}{\mathcal{F}}
\newcommand{\Aa}{\mathcal{A}}

\newtheorem{theorem}{Theorem}
\newtheorem{proposition}[theorem]{Proposition}

\newtheorem{hypothesis}{Hypothesis}
\newtheorem{lemma}[theorem]{Lemma}
  \newtheorem{mclaim}{Claim}
  \newtheorem{mfact}{Fact}
  \newtheorem{step}{Step}

\newcommand{\twotable}{{\sc{$2$-\textsc{Table}}}\xspace}

\newcommand{\cO}{\mathcal{O}}
\newcommand{\cOs}{\mathcal{O}^*}
\newcommand{\sm}{\setminus}
\newcommand{\eps}{\epsilon}
\newcommand{\gclass}{\Pi}
\newcommand{\cbound}{\aleph}
\newcommand{\alg}{\mathcal{A}}

\newcommand{\cqed}{\renewcommand{\qedsymbol}{$\lrcorner$}}

\pagestyle{plain}

\begin{document}

\title{~Largest chordal and interval subgraphs faster than $2^n$~\footnote{The research leading to these results has received funding from the European Research Council under the European Union's Seventh Framework Programme (FP/2007-2013) / ERC Grant Agreement n. 267959  and the Research Council of Norway (Yggdrasil mobility programme 2013-2014, project number 227328/F11), as well as from the Ministry of Education and Science of the Russian Federation, project 8216.}}

\author{
  Ivan Bliznets
  \thanks{
    St. Petersburg Academic University of the Russian Academy of Sciences, Russia,
      \texttt{ivanbliznets@tut.by}.
  }
  \and
  Fedor V. Fomin
  \thanks{
    Department of Informatics, University of Bergen, Norway, \texttt{\{fomin|michal.pilipczuk|yngvev\}@ii.uib.no}.
  }\addtocounter{footnote}{-1}
  \and
  Micha\l{} Pilipczuk\footnotemark\addtocounter{footnote}{-1}
  \and
  Yngve Villanger\footnotemark\addtocounter{footnote}{-1}
  }

\date{}

\maketitle

\begin{abstract}
We prove that in an $n$-vertex graph,    induced chordal and interval  subgraphs with the  maximum number of vertices can be found in time $\cO(2^{\lambda n})$ for some $\lambda<1$.  These are the first algorithms breaking the trivial $2^n n^{\cO(1)}$ bound of the brute-force search for these problems.

\end{abstract}

\section{Introduction}

The area of exact exponential algorithms is about   solving  intractable  problems  faster than the trivial exhaustive search, though still in exponential time  \cite{FominKratschbook10}.  
In this paper, we give  algorithms computing maximum induced chordal and interval subgraphs in a graph faster than the trivial brute-force search. These problems are interesting cases of a more general meta-problem \textsc{Maximum Induced $\gclass$-Subgraph}, where for a given graph $G$ and hereditary\footnote{A class of graphs is {\em{hereditary}} if it is closed under taking induced subgraphs.} class of graphs $\gclass$ one asks for an induced subgraph belonging $\gclass$ with the maximum possible number of vertices.

By the result of Lewis  and Yannakakis \cite{LewisY80}, the \textsc{Maximum Induced $\gclass$-Subgraph} problem is NP-hard for every non-trivial class of graphs $\gclass$. 
Different classes $\gclass$ were studied in the literature; examples include classes of graphs that are edgeless, planar, outerplanar, bipartite, complete bipartite, acyclic, degree-constrained, chordal etc. From the point of view of exact algorithms, as far as membership in $\gclass$ can be tested in polynomial time, a trivial brute-force search trying all possible vertex subsets of $G$ solves  
\textsc{Maximum Induced $\gclass$-Subgraph} in time $\cOs(2^n)$ on an $n$-vertex graph $G$.\footnote{The $\cOs(\cdot)$ notation suppresses terms polynomial in the input size.}  
However, many algorithms for \textsc{Maximum Induced $\gclass$-Subgraph} which are faster than $\cOs(2^n)$ can be found in the literature for explicit classes $\gclass$. 
Notable examples are $\gclass$ being the class of graphs that are edgeless~\cite{Robson86} (equivalent to \textsc{Maximum Independent Set}), acyclic~\cite{FominGPR08-On} (equivalent to \textsc{Maximum Induced Forest}), bipartite~\cite{RamanSS07}, planar~\cite{FominTV11}, degenerate~\cite{PilipczukP12}, regular \cite{GuptaRS12},  cluster graphs~\cite{FominGKLS10}, bounded treewidth \cite{Fomin:2010ys}, or bicliques~\cite{GaspersKL12}, see Table~\ref{table1}. Very recently, Fomin et al.~\cite{FominTV13} have shown that for every hereditary class of graphs $\gclass$ that have constant treewidth and are definable in counting monadic second-order logic (CMSO), the \textsc{Maximum Induced $\gclass$-Subgraph} problem can be solved in $\cO(1.7347^n)$ time.

The listed examples suggest that existence of algorithms faster than $2^n$ for \textsc{Maximum Induced $\gclass$-Subgraph} can be a phenomenon of a much more general nature. In fact, so far we do not know any efficiently recognizable class $\gclass$ for which a lower bound could be derived. Therefore,   the following   is a folklore open problem in the field; we could not find it  explicitly stated in the literature, so we state it below.

\begin{hypothesis}\label{conj:main}
For every hereditary graph class $\gclass$ that can be recognized in polynomial time, the \textsc{Maximum Induced $\gclass$-Subgraph} problem can be solved in time $\cOs(2^{\lambda n})$ for some $\lambda<1$.
\end{hypothesis}

We expect that some additional weak conditions on~$\gclass$ may be needed to provide a positive answer to 
   hypothesis we discuss propositions of such relaxations in Section~\ref{sec:conlusion}. Thus, the aforementioned results~\cite{FominGKLS10,FominGPR08-On,FominTV11,FominTV13,GaspersKL12,PilipczukP12,RamanSS07,Robson86} can be viewed as partial progress on Hypothesis~\ref{conj:main}: the idea is to investigate how different features of the class $\gclass$ can be used to design an algorithm breaking the $2^n$ barrier.


\begin{table}[h!]
\begin{center}
\begin{tabular}{|c|c|c|}
\hline
Property & Time complexity &  Reference\\
\hline
edgeless & $\cO(1.2109^n)$  & Robson~\cite{Robson86}\\
acyclic & $\cO(1.7548^n)$ & Fomin et al.~\cite{FominGPR08-On}\\
bipartite & $\cO(1.62^n)$ & Raman et al.~\cite{RamanSS07}\\
planar & $\cO(1.7347^n)$ & Fomin et al.~\cite{FominTV11}\\
$d$-degenerate & $\cO((2-\eps_d)^n)$ & Pilipczuk$\times 2$~\cite{PilipczukP12}\\
cluster graph & $\cO(1.6181^n)$ & Fomin et al.~\cite{FominGKLS10}\\
biclique & $\cO(1.3642^n)$  & Gaspers et al.~\cite{GaspersKL12}\\
$o(n/\log{n})$ treewidth &  $\cO(1.7347^n)$  & Fomin and Villanger~\cite{Fomin:2010ys}\\
$r$-regular & $\cO((2-\eps_r)^n)$ & Gupta et al. ~\cite{GuptaRS12}\\
matching & $\cO(1.6957^n)$ & Gupta et al. ~\cite{GuptaRS12}\\
\hline
\end{tabular}
\end{center}
\caption{Known results for \textsc{Maximum Induced $\gclass$-Subgraph}}\label{table1}
\end{table}

Since every hereditary class of graphs $\gclass$ can be characterized by a (not necessarily finite) set of forbidden induced subgraphs, there is an equivalent formulation of the \textsc{Maximum Induced   $\gclass$-Subgraph} problem. For a set of graphs $\F$, a graph $G$ is called $\F$-free if it contains no graph from $\F$ as an induced subgraph. The \textsc{Maximum $\F$-free Subgraph} problem is to find a maximum induced $\F$-free subgraph of $G$. Clearly, if $\F$ is the set of forbidden induced subgraphs for $\gclass$, then the \textsc{Maximum Induced $\gclass$-Subgraph} problem and the \textsc{Maximum $\F$-free Subgraph} problem are equivalent.

It is well known that when the set $\F$ is \emph{finite}, then \textsc{Maximum $\F$-free Subgraph} can be solved in time $\cOs(2^{\lambda n})$, where $\lambda <1$. This can be seen by applying a simple branching arguments, see Proposition~\ref{lemma:finite_deletion}, or by reducing to the \textsc{$d$-Hitting Set} problem, which is solvable faster than $\cOs(2^n)$ for every fixed $d$~\cite{FominGKLS10,Gaspers:2008rf}. Examples of $\F$-free classes of graphs for some finite set $\F$ are split  graphs, cographs, line graphs or trivially perfect graphs; see the book \cite{brandstadt1999graph} for more information on these graph classes.

It is however completely unclear if anything faster than the trivial brute-force is possible in the case when $\F$ is an infinite set, even when $\F$ consists of very simple graphs. One of the most known  and well studied classes of $\F$-free graphs is the class of  \emph{chordal graphs}, where $\F$ is the set of all cycles of length more than three.  Chordal graphs form a fundamental class of graphs whose properties are well understood. Another fundamental class of graphs is the class of \emph{interval graphs}. We refer to the book  of Golumbic  for an overview of properties and applications of chordal and interval graphs \cite{Golumbic80}. In spite of nice structural properties of these graphs,  no exact algorithms for \textsc{Maximum Induced Chordal Subgraph} and \textsc{Maximum Induced Interval Subgraph} problems better than the trivial $\cOs(2^n)$ were known prior to our work.
 
\smallskip\noindent\textbf{Our results.} 
We define four properties of a graph class and give an algorithm that, for every fixed graph class $\gclass$ (not part of the input) satisfying these properties, and for a given $n$-vertex graph $G$, finds a maximum induced subgraph of $G$ belonging to $\gclass$ in time $\cOs(2^{\lambda n})$ for some $\lambda<1$, where $\lambda$ depends only on the class $\gclass$. Because classes of chordal and interval graphs satisfy the required properties, as an immediate corollary we obtain that \textsc{Maximum Induced Chordal Subgraph} and \textsc{Maximum Induced Interval Subgraph} can be solved in time $\cOs(2^{\lambda n})$ for some $\lambda<1$.

When pipelined with simple branching arguments, our algorithms can be used to obtain time $\cOs(2^{\lambda n})$  algorithms for some $\lambda<1$  for a variety of other \textsc{Maximum Induced $\gclass$-Subgraph} problems, where class $\gclass$ comprises chordal/interval graphs that moreover contain no induced subgraph from a \emph{finite} forbidden set of graphs. 
Examples of such classes are proper interval graphs, Ptolemaic graphs, block graphs, or proper circular-arc graphs; see \cite{brandstadt1999graph} for definitions and discussions of these graph classes.

The main intention of our work was to break the trivial $2^n$ barrier, and thus to provide a new insight into Hypothesis~\ref{conj:main} by analyzing chordal and chordal-like graph classes. For this reason, we did not try to optimize the constant $\lambda$ in the exponent. There are several places where the running time of our algorithm can be improved at a cost of more involved arguments or intensive case analyses. However, we would like to stress again that the main motivation of our work is the theoretical study of Hypothesis~\ref{conj:main}, rather than pursuit of really efficient algorithms for the respective problems. Therefore, we refrain from giving these improvements and prefer keeping the arguments as simple as possible.

\smallskip\noindent\textbf{Organization.} In Section~\ref{sec:preliminaries} we give notation and recall known tools that will be used later. In Section~\ref{sec:properties} we discuss the four properties of a graph class that are needed for our algorithm to be applicable. Section~\ref{sec:algo} contains the description of the algorithm. For the convenience of the reader, in Section~\ref{app:choice} we summarize the order of choice of small constants used by the algorithm. Finally, in Section~\ref{sec:conlusion} we give some concluding remarks.

\section{Preliminaries}\label{sec:preliminaries}

\noindent{\bf{Standard graph notation.}} We denote by $G=(V,E)$ a finite, undirected and simple graph with vertex set $V(G)=V$ and edge set $E(G)=E$. We also use $n$ to denote the number of vertices in $G$. For a nonempty subset of vertices $W\subseteq V$, a subgraph {\em{induced}} by $W$ is defined as $G[W]=(W,E\cap (W\times W))$. An \emph{induced subgraph} of a graph is a subgraph induced by some subset of vertices. A \emph{clique} is a set of vertices inducing a complete subgraph. We say that a vertex set $W\subseteq V$ is \emph{connected} if $G[W]$ is connected. The \emph{open neighborhood} of a vertex $v$ is $N(v)=\{u\in V:~uv \in E\}$ and the \emph{closed neighborhood} is $N[v] = N(v) \cup \{v\}$. For a vertex set $W\subseteq V$ we put  $N(W) = \bigcup_{v \in W} N(v)\sm W$ and $N[W] = N(W) \cup W$. Whenever the graph to which the neighbourhood operation is applied is not clear from the context, we put it in the subscript of $N$. For a vertex subset $X$ of a graph $G$, we use $G\setminus X$ to denote the subgraph of $G$ induced by $V(G)\setminus X$.

A {\emph{graph class}} $\gclass$ is simply a family of graphs. We often use terms $\gclass$-graph or $\gclass$-subgraph to express membership in $\gclass$. We say that a graph class is {\emph{hereditary}} if $\gclass$ is closed under taking induced subgraphs. Every hereditary graph class can be described by a (possibly infinite) list of minimum forbidden induced subgraphs $\F_\gclass$: graph $G$ is in $\gclass$ if and only if it does not contain any induced subgraph from $\F_\gclass$, and for each $H\in \F_\gclass$ every induced subgraph of $H$, apart from $H$ itself, belongs to $\gclass$. The class of graphs not containing any induced subgraph from a list $\F$ will be denoted by {\emph{$\F$-free graphs}}.

\textit{Chordal graphs} are graphs not containing any induced cycles of length more than three, that is, chordal graphs are $\F$-free graphs where the set $\F$ consists of all cycles of length more than three. Chordal graphs are hereditary and polynomial-time recognizable~\cite{Golumbic80}. Chordal graphs admit many more characterizations, for example they are exactly graphs admitting a decomposition into a clique tree. A useful corollary of this fact is the following folklore lemma.

\begin{proposition}[Folklore]\label{le:balanced-separator}
If $H$ is a chordal graph, then there exists a clique $S$ in $H$ and a partition of $V(H)\setminus S$ into two subsets $X_1,X_2$, such that 
\begin{itemize}
\item[(i)] $|X_1|,|X_2|\leq \frac{2}{3}|V(H)|$, and \item[(ii)] there is no edge between $X_1$ and $X_2$.
\end{itemize}
\end{proposition}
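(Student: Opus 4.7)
The plan is to use the clique-tree representation of chordal graphs. Recall that every connected chordal graph $H$ admits a \emph{clique tree} $T$: a tree whose nodes $t$ correspond to the maximal cliques $K_t$ of $H$, such that for every $v\in V(H)$ the set $T_v:=\{t\in V(T):v\in K_t\}$ induces a connected (nonempty) subtree of $T$. If $H$ is disconnected I would handle the components separately, either distributing them greedily between $X_1$ and $X_2$ with $S=\emptyset$, or, if some component is too large, applying the argument below to it and absorbing the rest; so I assume $H$ is connected.

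The first step is to pick a weighted centroid of $T$. I would choose an arbitrary root, and for each $v\in V(H)$ let $t_v\in T_v$ be the node of $T_v$ closest to the root; define $w(t):=|\{v:t_v=t\}|$, so $\sum_t w(t)=n$. By the standard centroid lemma for node-weighted trees there exists $c\in V(T)$ such that every component of $T-c$ has $w$-weight at most $n/2$; I would re-root $T$ at $c$. Observing that $c\in T_v$ iff $v\in K_c$, after re-rooting we have $\{v:t_v=c\}=K_c$. I would then set $S:=K_c$ (a clique, as required) and, for each component $T_i$ of $T-c$, let $V_i:=\{v:t_v\in T_i\}$.

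What remains is to check two properties: (a)~the $V_i$'s partition $V(H)\setminus S$; (b)~$H$ has no edge between distinct $V_i,V_j$. For (a), if $t_v\in T_i$ then $c\notin T_v$ (else $t_v=c$ by the closest-to-root choice), so $T_v\subseteq T_i$ and in particular $v\notin K_c$; conversely, if $v\notin K_c$ then $c\notin T_v$, so $T_v$ lies in one single $T_i$. For (b), any edge $uv\in E(H)$ forces $T_u\cap T_v\ne\emptyset$; if $u\in V_i$ then $T_u\subseteq T_i$ by the same argument, and symmetrically for $v$, so both endpoints lie in a common $T_i$. The centroid property gives $|V_i|=\sum_{t\in T_i}w(t)\le n/2$. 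To merge the $V_i$'s into $X_1,X_2$ with both sides of size at most $2n/3$: if some $|V_i|\ge n/3$, I would put it alone in $X_1$ and everything else in $X_2$ (then $|X_1|\le n/2$ and $|X_2|\le n-|V_i|\le 2n/3$); otherwise all $|V_i|<n/3$ and a greedy ``add to the smaller side'' assignment keeps the imbalance below $n/3$, yielding $\max(|X_1|,|X_2|)\le n/2+n/6=2n/3$. The only substantive step is verifying (b), and this is exactly where the defining subtree property of a clique tree is used.
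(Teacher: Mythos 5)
The paper states this as a folklore proposition and supplies no proof, so there is nothing to compare your argument against; I assess it on its own. Your clique-tree plus weighted-centroid approach is a standard and valid way to prove the claim, and the final balancing step (single large piece versus greedy merging of small pieces) is handled correctly. One step should be stated more carefully: you re-root at the centroid $c$ and then write $|V_i|=\sum_{t\in T_i}w(t)\le n/2$, but $w$ depends on the choice of $t_v\in T_v$ and hence on the root, so after re-rooting the stated equality and the centroid bound refer to two different weight functions. The cleanest fix is to not re-root at all: keep the original root and the original $w$, define $V_i=\{v: T_v\subseteq T_i\}$ for each component $T_i$ of $T-c$, and observe that for every $v\in V_i$ the original $t_v$ already lies in $T_i$, giving $|V_i|\le\sum_{t\in T_i}w(t)\le n/2$; the facts that the $V_i$ partition $V(H)\setminus K_c$ and that no edge joins distinct $V_i,V_j$ then follow exactly as you argued, since any $T_v$ with $c\notin T_v$ lies entirely in one component of $T-c$. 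Similarly, in the disconnected case ``absorbing the rest'' should be spelled out: if some component $C$ has $|C|>\frac{2}{3}n$, apply the argument to $C$ and add the remaining $n-|C|<\frac{1}{3}n$ vertices to the \emph{smaller} of the two resulting sides, which keeps both sides at most $\frac{2}{3}n$; components of intermediate size are handled with $S=\emptyset$. These are presentation-level gaps; the underlying argument is sound.
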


Such a set $S$ is called a {\emph{$\frac{2}{3}$-balanced clique separator}} in $H$. Note that since $|X_2|\leq \frac{2}{3}|V(H)|$, then $|X_1|=|V(H)|-|S|-|X_1|\geq \frac{1}{3}|V(H)|-|S|$, and the same holds also for $X_2$.

Interval graphs form a subclass of chordal graphs admitting a decomposition into a clique path instead of less restrictive clique tree. Interval graphs are also hereditary and polynomial-time recognizable~\cite{Golumbic80}. Their characterization in terms of minimal forbidden induced subgraphs was given by Lekkerkerker and Boland \cite{LekkerkerkerB62}; see Figure~\ref{fig:interval} for reference. The book of Golumbic~\cite{Golumbic80} provides a thorough introduction to chordal and interval graphs. 

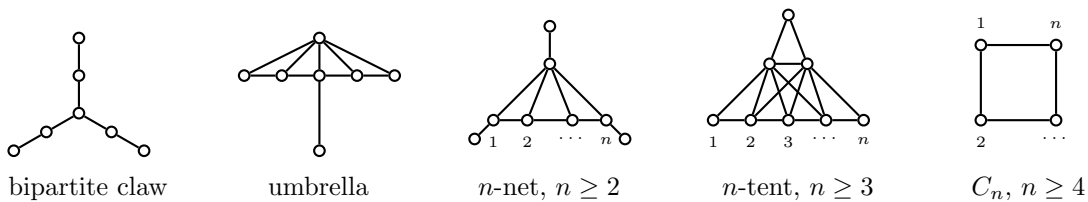
\begin{figure}[h]
  \centering

\begin{subfigure}[t]{0.13\textwidth}
\begin{tikzpicture}[thick,scale=.5]
\node[draw,circle,fill=white,minimum size=4pt,
                            inner sep=0pt] (c) at (0:0) {};
\node[draw,circle,fill=white,minimum size=4pt,
                            inner sep=0pt] (r1) at (-30:1) {} edge[-](c) ;
\node[draw,circle,fill=white,minimum size=4pt,
                            inner sep=0pt] (r2) at (-30:2) {} edge[-](r1);
\node[draw,circle,fill=white,minimum size=4pt,
                            inner sep=0pt] (u1) at (90:1) {} edge[-](c);
\node[draw,circle,fill=white,minimum size=4pt,
                            inner sep=0pt] (u2) at (90:2) {} edge[-](u1);
\node[draw,circle,fill=white,minimum size=4pt,
                            inner sep=0pt] (l1) at (210:1) {} edge[-](c);
\node[draw,circle,fill=white,minimum size=4pt,
                            inner sep=0pt] (l2) at (210:2) {} edge[-](l1);
\end{tikzpicture}
\caption{bipartite claw}
\end{subfigure}
\hspace{.04\textwidth}
\begin{subfigure}[t]{0.13\textwidth}
\begin{tikzpicture}[thick,scale=0.5]
\node[draw,circle,fill=white,minimum size=4pt,
                            inner sep=0pt] (c) at (0,1) {};
\node[draw,circle,fill=white,minimum size=4pt,
                            inner sep=0pt] (u) at (0,2) {} edge[-](c) ;
\node[draw,circle,fill=white,minimum size=4pt,
                            inner sep=0pt] (r1) at (1,1) {} edge[-](c) edge[-](u);
\node[draw,circle,fill=white,minimum size=4pt,
                            inner sep=0pt] (r2) at (2,1) {} edge[-](u) edge[-](r1);
\node[draw,circle,fill=white,minimum size=4pt,
                            inner sep=0pt] (l1) at (-1,1) {} edge[-](c) edge[-](u);
\node[draw,circle,fill=white,minimum size=4pt,
                            inner sep=0pt] (r1) at (-2,1) {} edge[-](u) edge[-](l1);
\node[draw, circle,fill=white,minimum size=4pt,inner sep=0pt] (b) at (0,-1){} edge[-](c);
\end{tikzpicture}
\caption{umbrella}
\end{subfigure}
\hspace{.04\textwidth}
\begin{subfigure}[t]{0.13\textwidth}
\begin{tikzpicture}[thick,scale=0.5]
\node[draw,circle,fill=white,minimum size=4pt,
                            inner sep=0pt] (u2) at (0,2) {};
\node[draw,circle,fill=white,minimum size=4pt,
                            inner sep=0pt] (u1) at (0,1) {} edge[-](u2) ;
\node[draw,circle,fill=white,minimum size=4pt,
                            inner sep=0pt,label=below:{\tiny{$1$}}] (l2) at (-1.5,-0.5) {} edge[-](u1);
\node[draw,circle,fill=white,minimum size=4pt,
                            inner sep=0pt, label=below:{\tiny{$2$}}] (l1) at (-0.6,-0.5) {} edge[-](u1) edge[-](l2);
                            
\node[draw,circle,fill=white,minimum size=4pt,
                            inner sep=0pt, label=below:{\tiny{$n$}}] (r2) at (1.5,-0.5) {} edge[-](u1)  ;
\node[draw,circle,fill=white,minimum size=4pt,
                            inner sep=0pt, label=below:{\tiny{$\dots$}}] (r1) at (0.6,-0.5) {} edge[-](u1) edge[-](r2) edge[-](l1);
                            
\node[draw,circle,fill=white,minimum size=4pt,
                            inner sep=0pt] (ll) at (-2,-1) {} edge[-](l2);
\node[draw,circle,fill=white,minimum size=4pt,
                            inner sep=0pt] (rr) at (2,-1) {} edge[-](r2);

\end{tikzpicture}
\caption{$n$-net, $n\geq 2$}
\end{subfigure}
\hspace{.04\textwidth}
\begin{subfigure}[t]{0.16\textwidth}
\begin{tikzpicture}[thick,scale=0.5]
\node[draw,circle,fill=white,minimum size=4pt,
                            inner sep=0pt] (u) at (0,2) {};
                            
\node[draw,circle,fill=white,minimum size=4pt,
                            inner sep=0pt] (l) at (-0.5,0.7) {} edge[-](u) ;
\node[draw,circle,fill=white,minimum size=4pt,
                            inner sep=0pt] (r) at (0.5,0.7) {} edge[-](u) edge[-](l);
                            
\node[draw,circle,fill=white,minimum size=4pt,
                            inner sep=0pt,label=below:{\tiny{$1$}} ] (b1) at (-2,-0.8) {} edge[-](l);
\node[draw,circle,fill=white,minimum size=4pt,
                            inner sep=0pt, label=below:{\tiny{$2$}}] (b2) at (-1,-0.8) {} edge[-](b1) edge[-](l) edge[-](r) ;
\node[draw,circle,fill=white,minimum size=4pt,
                            inner sep=0pt, label=below:{\tiny{$3$}}] (b3) at (0,-0.8) {} edge[-](b2) edge[-](l) edge[-](r);
\node[draw,circle,fill=white,minimum size=4pt,
                            inner sep=0pt,label=below:{\tiny{$\dots$}}] (b4) at (1,-0.8) {} edge[-](b3) edge[-](l) edge[-](r);
\node[draw,circle,fill=white,minimum size=4pt,
                            inner sep=0pt,label=below:{\tiny{$n$}}] (b5) at (2,-0.8) {} edge[-](r) edge[-](b4);

\end{tikzpicture}
\caption{$n$-tent, $n\geq 3$}
\end{subfigure}
\hspace{.04\textwidth}
\begin{subfigure}[t]{0.10\textwidth}
\begin{tikzpicture}[thick,scale=0.5]
\node[draw,circle,fill=white,minimum size=4pt,
                            inner sep=0pt, label=above:{\tiny{$n$}}] (ur) at (1,1.5) {};
\node[draw,circle,fill=white,minimum size=4pt,
                            inner sep=0pt,label=above:{\tiny{$1$}}] (ul) at (-1,1.5) {} edge[-](ur) ;
                            
\node[draw,circle,fill=white,minimum size=4pt,
                            inner sep=0pt,label=below:{\tiny{$2$}}] (bl) at (-1,-0.5) {} edge[-](ul);
\node[draw,circle,fill=white,minimum size=4pt,
                            inner sep=0pt, label=below:{\tiny{$\dots$}}] (br) at (1,-0.5) {} edge[-](bl) edge[-](ur);

\end{tikzpicture}
\caption{$C_n$, $n\geq 4$}
\end{subfigure}

\caption{Forbidden induced subgraphs for interval graphs}\label{fig:interval}
\end{figure}

We now describe the classical tools needed for the algorithm. The following result basically follows from the observation that branching on forbidden structures of constant size always leads to complexity better than $2^n$.

\begin{proposition}\label{lemma:finite_deletion}
Let $\F$ be a finite set of graphs and let $\ell$ be the maximum number of vertices in a graph from $\F$. Let $\Pi$ be a hereditary graph class that is polynomial-time recognizable. Assume that there exists an algorithm $\Aa$ that for a given $\F$-free graph $G$ on $n$ vertices, in $\cOs(2^{\eps n})$ time finds a maximum induced $\Pi$-subgraph of $G$, for some $\eps<1$. Then there exists an algorithm $\Aa'$ that for a given  graph $G$ on $n$ vertices, finds a maximum induced $\F$-free $\Pi$-graph in $G$ in time  $\cOs(2^{\eps' n})$, where $\eps'<1$ is a constant  depending on $\eps$ and $\ell$.
\end{proposition}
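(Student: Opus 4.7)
My plan is to combine two complementary techniques: enumerating small $\F$-hitting sets (so that I can legitimately invoke $\Aa$ on the $\F$-free residue), and greedily packing vertex-disjoint induced $\F$-copies to directly enumerate candidate solutions. Each technique handles a different regime of the optimum deletion size $|V(G) \setminus S^*|$. I will fix a constant $\alpha \in (0, 1/2)$ depending on $\eps$ and $\ell$, run both subroutines, and return the best answer found.

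In \emph{Case 1}, I enumerate every subset $D \subseteq V(G)$ with $|D| \le \alpha n$. For each such $D$, I check in polynomial time whether $G - D$ is $\F$-free (which is polynomial since $\F$ is finite with each member of size at most $\ell$), and when this test passes I invoke $\Aa(G-D)$ and record the size of the returned induced $\Pi$-subgraph. If $|V \setminus S^*| \le \alpha n$ then $D := V \setminus S^*$ is among the sets enumerated and $\Aa(G[S^*])$ returns exactly $|S^*|$. The total cost is at most $\binom{n}{\le \alpha n} \cdot \cOs(2^{\eps n}) \le \cOs\bigl(2^{(H(\alpha)+\eps)n}\bigr)$, where $H$ denotes the binary entropy function.

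In \emph{Case 2}, I greedily compute a maximal collection $H_1,\ldots,H_k$ of vertex-disjoint induced $\F$-copies in $G$ and set $U := \bigcup_i V(H_i)$; by maximality $U$ is itself an $\F$-hitting set with $|U| \le k\ell$. Since any feasible $S$ is $\F$-free, we must have $S \cap V(H_i) \subsetneq V(H_i)$ for every $i$. I therefore enumerate all tuples $(T_1,\ldots,T_k)$ with $T_i \subsetneq V(H_i)$ together with all subsets $Z \subseteq V \setminus U$, form $S := Z \cup \bigcup_i T_i$, and test in polynomial time whether $G[S]$ is both in $\Pi$ and $\F$-free. Since $S^*$ corresponds to one such tuple, the enumeration recovers $|S^*|$. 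The number of candidates is at most $(2^\ell-1)^k \cdot 2^{n-k\ell} = 2^n \cdot (1-2^{-\ell})^k$, and when $k \ge \alpha n/\ell$ this is $\cOs\bigl(2^{(1-\alpha\delta_\ell)n}\bigr)$ with $\delta_\ell := \log_2\bigl(2^\ell/(2^\ell-1)\bigr)/\ell > 0$ depending only on $\ell$.

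The main obstacle is the regime in which the greedy packing returns only a small $k < \alpha n/\ell$, because Case 2 alone is then close to $2^n$. In that regime, however, $|U| \le k\ell < \alpha n$, so Case 1 already enumerates $D = U$ (and every size-$\le\alpha n$ extension). To recover the at most $|U|$ vertices of $U$ that might lie in $S^*$, I will supplement Case 1 with an enumeration over subsets $A \subseteq U$, invoking $\Aa$ on $G - (U \setminus A)$ whenever that subgraph is $\F$-free and otherwise performing a short auxiliary branching on the $\F$-copies straddling $A$ and $V \setminus U$; the overhead factor is $2^{|U|} \le 2^{\alpha n}$, so this subcase fits within $\cOs(2^{(\alpha+\eps)n})$. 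Finally, choosing $\alpha > 0$ small enough as a function of $\eps$ and $\ell$ so that each of $H(\alpha) + \eps$, $1 - \alpha\delta_\ell$, and $\alpha + \eps$ is strictly less than $1$ delivers the required constant $\eps' < 1$ depending only on $\eps$ and $\ell$.
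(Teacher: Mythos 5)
Your Cases 1 and 2 are individually fine, but the handoff in the third paragraph is where the argument breaks. When the greedy packing has $k<\alpha n/\ell$ (so $|U|<\alpha n$) and the optimum $S^*$ also has $|V\setminus S^*|>\alpha n$, you must enumerate $A\subseteq U$ and then actually solve the residual problem on $G-(U\setminus A)$. For the correct $A^*=S^*\cap U$ there is no reason for $G-(U\setminus A^*)$ to be $\F$-free: maximality of the packing only guarantees that every $\F$-copy meets $U$, and an $\F$-copy can meet $U$ entirely inside $A^*$ while extending into $V\setminus U$. So the ``invoke $\Aa$ whenever $\F$-free, otherwise branch'' step is unavoidable, and the claimed bound $2^{|U|}\cdot\cOs(2^{\eps n})=\cOs(2^{(\alpha+\eps)n})$ simply does not account for the branching. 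If you run that branching to completion (i.e.\ until $G-(U\setminus A)-D'$ becomes $\F$-free and then call $\Aa$), a straightforward estimate gives something like $2^{|U|}\cdot 2^{\eps_\ell(n-|U|)}\cdot\cOs(2^{\eps n})$ for a constant $\eps_\ell<1$ depending only on $\ell$, whose exponent $\eps_\ell+\eps$ plus lower-order terms need not be below $1$ when $\eps$ is close to $1$. The branching is not ``short'' in any sense that you establish.

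The missing ingredient is exactly the device the paper uses: a budget-and-brute-force cutoff. The paper branches on $\F$-copies (growing disjoint sets $A,D$ by $\le\ell$ vertices per step at branching cost $2^{\ell'}-1\le 2^{\eps_\ell\ell'}$), and stops as soon as $|A\cup D|>(1-\eps)n$; it then enumerates all subsets of the $\le\eps n$ unresolved vertices in $\cOs(2^{\eps n})$ time, rather than trying to reach an $\F$-free residue. This caps the branching tree at $\cO(2^{\eps_\ell(1-\eps)n})$ leaves, each costing $\cOs(2^{\eps n})$, yielding $\eps'=\eps_\ell(1-\eps)+\eps<1$. If you insert the same cutoff into your auxiliary branching, the time bound works out (with $\alpha<(1-\eps)(1-\eps_\ell)$), but then the enumeration of $A\subseteq U$ and the entire packing/Case-1/Case-2 scaffolding become superfluous: you can run the budgeted branching directly on $G$ from $A=D=\emptyset$, which is precisely the paper's proof. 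So the gap is real, and the natural fix collapses your construction back to the paper's simpler argument.
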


\begin{proof}
Let $\Pi'$ be the class of $\F$-free $\Pi$-graphs; note that for constant $\ell$, $\Pi'$ is polynomial-time recognizable. Algorithm $\Aa'$, given an $n$-vertex graph $G=(V,E)$, tries to find a maximum induced $\Pi'$-subgraph using standard  branching arguments. At each point the algorithm maintains two disjoint sets $A,D\subseteq V$; at the starting point $A=D=\emptyset$. Given $A,D$, the algorithm tries to find a maximum size set $X$ inducing a $\Pi'$-graph such that $A\subseteq X$ and $D\cap X=\emptyset$. Whenever we arrive at a situation when $|A\cup D|>(1-\eps)n$, we stop the branching procedure and perform a brute force check on the remaining vertices of $V\setminus (A\cup D)$. That is, we examine all subsets $A'\subseteq V\setminus (A\cup D)$ and test whether $G[A\cup A']$ induces a $\Pi'$-graph. This takes time $\cOs(2^{|V\setminus (A\cup D)|})\leq \cOs(2^{\eps n})$.

At each step of the branching procedure we check in polynomial time whether $G\setminus D$ contains a subgraph isomorphic to any graph of $\F$. Assume first that we have found such a subgraph and let $S\subseteq V\setminus D$ be its vertex set. Clearly, for every induced $\Pi'$-subgraph, at least one of vertices of $S$ is not contained in this subgraph. As vertices of $S\cap A$ have to be in the solution searched in this branch, we branch on set $S\setminus A$. More precisely, for every partition $(A',D')$ of $S\setminus A$ where $D'$ is nonempty, we produce a branch in which $A'$ is incorporated into $A$ and $D'$ is incorporated into $D$. Note that this leads to $2^{\ell'}-1$ branches produced and increasing $|A\cup D|$ by $\ell'$, where $\ell'=|S\setminus A|\leq \ell$. Note moreover that since $\ell'\leq \ell$, then $2^{\ell'}-1\leq 2^{\eps_\ell \ell'}$ for some $\eps_\ell<1$ depending on $\ell$.

Assume now that $G\setminus D$ contains no induced subgraph from $\F$, hence it is $\F$-free. We apply the algorithm $\Aa$ to $G\setminus D$ to compute the maximum induced $\Pi$-subgraph of $G\setminus D$. As $G\setminus D$ is $\F$-free, this subgraph is in fact in the class $\Pi'$. Note here that at this point we relax the condition that the set we are looking for has to contain $A$ as a subset, however this does not affect correctness of the algorithm: the found subgraph is still an induced $\Pi'$-subgraph of $G$, so it can be only larger than the solution we are looking for in this branch. The running time of the application of $\Aa$ is $\cOs(2^{\eps|V\setminus D|})\leq \cOs(2^{\eps n})$.

Let us now discuss the running time of the algorithm. Note that at the point of applying brute-force check we have that $(1-\eps) n+\ell\geq |A\cup D|>(1-\eps) n$, as $|A\cup D|$ can increase by at most $\ell$ at each step. Each branching step increases $|A\cup D|$ by some $\ell'$ and introduces at most $2^{\eps_\ell \ell'}$ subbranches, hence the total number of instances where algorithm $\Aa$ or a brute-force check is applied is at most $2^{\eps_\ell((1-\eps)n+\ell)}=\cO(2^{\eps_\ell (1-\eps) n})$. Application of brute-force or algorithm $\Aa$ takes $\cOs(2^{\eps n})$ time. Hence, the total running time is $\cOs(2^{\eps' n})$, where $\eps'=\eps_\ell(1-\eps)+\eps<1$. 
\end{proof}

The following proposition from \cite{FominV12} will be useful for us to guess connected sets of vertices with small running-time overhead.
\begin{proposition}[\cite{FominV12}]\label{le:connectedComp} Let $G=(V,E)$ be a graph. For every $v\in
V$, and $b,f\geq 0$,  the number of connected vertex subsets
$B\subseteq V$ such that
\begin{itemize}
\item[(i)]
 $v \in B$,\item[(ii)] $|B| = b+1$, and \item[(iii)] $|N(B)|=f$,  \end{itemize}
  is at most $\binom{b+f}{b}$.
 Moreover, all such subsets can be enumerated in time $\cOs(\binom{b+f}{b})$.
\end{proposition}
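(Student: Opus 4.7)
The plan is a depth-first branching enumeration that charges each output set to a distinct binary string whose total count is $\binom{b+f}{b}$. I maintain three disjoint vertex sets along the recursion: the current connected piece $B$ (initialized to $\{v\}$), a forbidden set $F$ (initialized to $\emptyset$), and the frontier $N = N_G(B)\setminus F$ (initialized to $N_G(v)$). At each branching step I pick the vertex $u\in N$ smallest in some fixed global ordering of $V$, and recurse into two branches: (i) shift $u$ from $N$ to $B$ and add the previously unseen neighbors of $u$ to $N$; (ii) shift $u$ from $N$ to $F$. Branching stops as soon as $|B|=b+1$, at which point the remaining vertices of $N$ are moved to $F$ deterministically, and the leaf outputs $B$ provided $|F|=f$. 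I prune whenever $|F|>f$, $|B|>b+1$, or $N=\emptyset$ with $|B|<b+1$.

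For correctness I would rely on the invariant $F\cup N = N_G(B)$, which holds throughout: a vertex enters $N$ exactly when it first becomes adjacent to $B$, and leaves $N$ only toward $B$ or $F$. This immediately gives $|N_G(B)|=|F|=f$ at every accepted leaf, while connectedness of $B$ follows from the growth rule. Conversely, for any target $B^\star\ni v$ with $|B^\star|=b+1$ and $|N_G(B^\star)|=f$, connectedness of $B^\star$ guarantees that $N$ always intersects $B^\star$ while $B\subsetneq B^\star$, so the required ``add''-choices are always available and $B^\star$ is indeed enumerated.

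To bound the count, I label each accepted leaf by the sequence of its $b+f$ decisions (branching decisions followed by the flush-phase forbids). By the canonical choice of $u$, the state $(B,F,N)$ at every step is reconstructible from the decision prefix, hence so is the next $u$, so different leaves receive different sequences; since each such sequence has exactly $b$ ``add'' symbols, the number of accepted leaves is at most $\binom{b+f}{b}$. The branching tree is binary with polynomial work per node, yielding the claimed enumeration time. The main obstacle I anticipate is bookkeeping around the flush phase: I must verify that leaves which stopped branching at different values of $|B|$ still receive distinct length-$(b+f)$ labels, and that pruning is strong enough to keep the total tree size bounded, up to polynomial factors, by the number of accepted leaves rather than by the sum over all possible frontier sizes encountered along rejected paths.
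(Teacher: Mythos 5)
The paper cites this proposition from \cite{FominV12} rather than proving it, so there is no in-paper argument to compare against; the proof in \cite{FominV12} is indeed a frontier-growing branching enumeration of exactly the shape you describe, so your route is the standard one and your correctness and counting arguments are sound. Both concerns you flag at the end resolve cleanly, and it is worth spelling out why. For label distinctness: the branching phase of an accepted path terminates at the step where $|B|$ first reaches $b+1$, i.e.\ at the $b$-th ``add'' symbol, so the position of the $b$-th add within a length-$(b+f)$ label pins down exactly where branching ended and flushing began; two accepted leaves with the same label therefore share their entire branching-phase decision sequence, and your reconstructibility observation then forces the same state and hence the same leaf, giving the bound $\binom{b+f}{b}$ on enumerated sets. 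For the tree size: every root-to-leaf path contains at most $b$ adds (branching halts at $|B|=b+1$) and at most $f+1$ forbids (you prune once $|F|>f$), so the tree has depth at most $b+f+1$. Padding each leaf's decision string first with adds up to $b$ total adds and then with forbids up to total length $b+f+1$ is an injection from leaves into length-$(b+f+1)$ binary strings with exactly $b$ adds: distinct leaves of a tree have incomparable paths and hence already disagree within the shared prefix that padding leaves untouched. Thus the total number of leaves is at most $\binom{b+f+1}{b}\leq (b+f+1)\binom{b+f}{b}$, the binary tree has at most twice that many nodes, and with polynomial work per node the running time $\cOs\bigl(\binom{b+f}{b}\bigr)$ follows. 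With these two observations filled in, your proof is complete.
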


The last necessary ingredient is the 
classical idea used by Schroeppel and Shamir \cite{SchroeppelS81-A} for  solving  \textsc{Subset Sum} by reducing it to an instance of $2$-\textsc{Table}.
 In the  $2$-\textsc{Table} problem, we are given two $k\times m_i$ matrices $T_i$, $i=1,2$, and a vector $\vec{s}\in \mathbb{Q}^{k}$. Columns of each matrix are $m_i$ vectors of $\mathbb{Q}^{k}$. The question is, if there is a column of the first matrix and a column of the second matrix such that the sum of these two columns is equal to $\vec{s}$. A trivial solution to the \twotable problem would be to try all possible pairs of 
vectors; however, this problem can be solved more efficiently. We can sort columns of $T_1$ lexicographically in $\cO(km_1\log m_1)$ time, and for every column $\vec{v}$ of $T_2$ check whether $T_1$ contains a column equal to $\vec{s}-\vec{v}$ in $\cO(k\log m_1)$ time using binary search.
\begin{proposition}[\cite{SchroeppelS81-A}]\label{lem:2table}
The \twotable problem can be solved in time $\cO((m_1+m_2) k\log m_1)$.
\end{proposition}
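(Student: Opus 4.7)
The plan is to follow exactly the sort-and-search strategy sketched in the paragraph preceding the statement, which is a standard meet-in-the-middle trick. The key observation is that the problem reduces to a membership query on a fixed lookup structure: a pair of columns $\vec{u}\in T_1, \vec{v}\in T_2$ with $\vec{u}+\vec{v}=\vec{s}$ exists if and only if, for some column $\vec{v}$ of $T_2$, the vector $\vec{s}-\vec{v}$ occurs among the columns of $T_1$. So once the columns of $T_1$ are organized into a data structure supporting fast equality search, we can answer the problem by $m_2$ queries.

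First I would sort the columns of $T_1$ lexicographically. Comparing two vectors in $\mathbb{Q}^k$ takes $\cO(k)$ time in the worst case, and sorting $m_1$ items with a comparison-based algorithm (e.g., mergesort) uses $\cO(m_1\log m_1)$ comparisons, giving a running time of $\cO(km_1\log m_1)$ for this phase. Second, for each of the $m_2$ columns $\vec{v}$ of $T_2$, I compute $\vec{s}-\vec{v}$ in $\cO(k)$ time and perform a binary search for it in the sorted list of columns of $T_1$. A single binary search uses $\cO(\log m_1)$ comparisons, each costing $\cO(k)$, hence $\cO(k\log m_1)$ per column of $T_2$, so $\cO(km_2\log m_1)$ in total across the second phase. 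Summing the two contributions yields $\cO((m_1+m_2)k\log m_1)$, which is the claimed bound. Correctness is immediate: we report a solution exactly when some $\vec{s}-\vec{v}$ is found in (the sorted version of) $T_1$, which is precisely the condition for a valid pair of columns to exist.

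There is essentially no obstacle here; the only point that deserves any comment is the cost model for comparisons, and we simply use the natural bound that lexicographic comparison of two length-$k$ rational vectors is $\cO(k)$ operations (this also tacitly assumes unit-cost arithmetic on the entries, which is standard in this context and matches how the proposition will be used later in the paper).
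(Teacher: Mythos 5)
Your proof is correct and follows exactly the sort-and-binary-search strategy that the paper itself sketches in the paragraph immediately before the proposition. Nothing further to add.
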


\section{Properties of the graph class}\label{sec:properties}

In this section we gather the required properties of the graph class $\gclass$ for our algorithm to be applicable. We consider only hereditary subclasses of chordal graphs, hence our first property is the following.

\vskip 0.1cm
\noindent{\bf{Property (1).}} $\gclass$ is a hereditary subclass of chordal graphs.
\vskip 0.1cm

As $\gclass$ is hereditary, it may be described by a list of vertex-minimal forbidden induced subgraphs $\F_\gclass$. We need the following properties of $\F_\gclass$:

\vskip 0.1cm
\noindent{\bf{Property (2).}} All graphs in $\F_\gclass$ are connected, and all of them do not contain a clique of size $\cbound+1$ for some universal constant $\cbound$.
\vskip 0.1cm

For chordal graphs $\F_\gclass$ consists of cycles of length at least $4$, hence $\cbound=2$. For interval graphs, an inspection of the list of forbidden induced subgraphs, depicted on Figure~\ref{fig:interval}, shows that we may take $\cbound=4$. In the following, we always treat $\cbound$ as a universal constant for class $\gclass$ on which all the later constants may depend; moreover, $\cbound$ may influence the exponents of polynomial factors hidden in the $\cOs$ notation. Let us remark that connectedness of all the forbidden induced subgraphs is equivalent to requiring $\gclass$ to be closed under taking disjoint union.
An example of a subclass of chordal graphs not satisfying this property, is the class of strongly  chordal graphs. The reason for that is that minimal forbidden subgraphs of strongly chordal graphs can contain  a clique of any size, see \cite{brandstadt1999graph} for more information on this class of graphs.  

Thirdly, we need our graph class to be efficiently recognizable.

\vskip 0.1cm
\noindent{\bf{Property (3).}} $\gclass$ is polynomial-time recognizable.
\vskip 0.1cm

 Chordal graphs and interval graphs have polynomial time recognition algorithms, see e.g. \cite{Golumbic80}. For our arguments to work we need one more algorithmic property. The property that we need can be described intuitively as robustness with respect to clique separators.  More precisely, we need the following statement.

\vskip 0.1cm
\noindent{\bf{Property (4).}} There exists a polynomial-time algorithm $\alg$ that takes as input a graph $G$ together with a clique $S$ in $G$. The algorithm answers YES or NO, such that the following conditions are satisfied:
\begin{itemize}
\item If $\alg$ answers YES on inputs $(G_1,S_1)$ and $(G_2,S_2)$ where $|S_1|=|S_2|$, then graph $G'$, obtained by taking disjoint union of $G_1$ and $G_2$ and identifying every vertex of $S_1$ with a different vertex of $S_2$ in any manner, belongs to~$\gclass$.
\item If $G\in\gclass$, then there exists a clique separator $S$ in $G$ such that $V(G)\setminus S$ may be partitioned into two sets $X_1,X_2$ such that (i) $|X_1|,|X_2|\leq \frac{2}{3}|V(G)|$, (ii) there is no edge between $X_1$ and $X_2$, (iii) $\alg$ answers YES on $(G[X_1\cup S],S)$ and on $(G[X_2\cup S],S)$.
\end{itemize}
\vskip 0.1cm

Observe that Property (1) and Proposition~\ref{le:balanced-separator} already provides us with some $\frac{2}{3}$-balanced clique separator $S$ of $G$. Shortly speaking, Property (4) requires that in addition belonging to $\gclass$ may be tested by looking at $G[X_1\cup S]$ and $G[X_2\cup S]$ independently. For chordal graphs, Property (4) follows from Proposition~\ref{le:balanced-separator} and a folklore observation that if $S$ is a clique separator in a graph $G$, with $(X_1,X_2)$ being a partition of $V(G)\setminus S$ such that there is no edge between $X_1$ and $X_2$, then $G$ is chordal if and only if $G[X_1\cup S]$ and  $G[X_2\cup S]$ are chordal. Hence, we may take chordality testing for the algorithm $\alg$.

For interval graphs, let us take the clique path of $G$ and examine a clique separator $S$ such that there is at most half of vertices before it and at most half after it. Let $X_1$ be the vertices before $S$ on the clique path, and $X_2$ be the vertices after $S$. Clearly, $S$ is then even a $\frac{1}{2}$-balanced clique separator, with partition $(X_1,X_2)$ of $V(G)\setminus S$. Then it follows that $G[X_1\cup S]$ and $G[X_2\cup S]$ admit clique paths in which $S$ is one of the end bags of the path. On the other hand, assume that we are given any two graphs $G_1,G_2$ with equally sized cliques $S_1,S_2$, such that $G_1,G_2$ admit clique paths with $S_1$, $S_2$ as the end bags. Then we may create a clique path of the graph $G'$ obtained from the disjoint union of $G_1$ and $G_2$ and identification of $S_1$ and $S_2$ in any manner, by simply taking the clique paths for $G_1$ and $G_2$ and identifying the end bags containing $S_1$ and $S_2$, respectively. Hence, as $\alg$ we may take an algorithm which for input $(G,S)$ checks whether $G$ is interval and admits a clique path with $S$ as the end bag. Such a test may be easily done as follows: we add two vertices $v$, $v'$ to $G$, where $v$ is adjacent to $v'$ and to every vertex of $S$, while $v'$ is adjacent only to $v$. In this manner we force $S$ to be the end bag, and run the intervality test. Hence, interval graphs also satisfy Property (4).

\section{The algorithm}\label{sec:algo}
In this section we prove the main result of the paper, which is the following.

\begin{theorem}\label{thm:chord_2n}
If $\gclass$ satisfies Properties (1)-(4), then there exists an algorithm which, given an $n$-vertex graph $G$, returns a maximum induced subgraph of $G$ belonging to $\gclass$ in time $\cOs(2^{\lambda n})$ for some $\lambda<1$, where $\lambda$ depends only on $\cbound$.
\end{theorem}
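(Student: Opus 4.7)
The plan is a divide-and-conquer algorithm driven by the $\frac{2}{3}$-balanced clique separator guaranteed by Property~(4). Let $H^*$ denote the unknown optimum, and write $V(H^*)=X_1^*\cup S^*\cup X_2^*$ as in Property~(4). The algorithm uses several small constants $\epsilon_1,\epsilon_2,\ldots$ (fixed in Section~\ref{app:choice}) together with the clique bound $\cbound$ from Property~(2) to obtain a final exponent $\lambda<1$.

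First, I would dispose of the two extreme size regimes. If either $|V(H^*)|\leq \epsilon_1 n$ or $|V(G)\setminus V(H^*)|\leq \epsilon_1 n$, we enumerate the smaller of the two sets in time $2^{H(\epsilon_1)n}\cdot\mathrm{poly}(n)$, where $H$ is the binary entropy, and verify each candidate via Property~(3); for $\epsilon_1<\tfrac{1}{2}$ this is comfortably below $2^n$. Hence we may assume that both $V(H^*)$ and its complement have linearly many vertices, and in particular $|X_i^*|\leq \tfrac{2}{3}|V(H^*)|\leq \tfrac{2}{3}n$.

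In the main case, I would guess the separator $S^*$. Because $S^*$ is a clique in $G$, two sub-strategies are available. When $|S^*|\leq \epsilon_2 n$, enumerate $S^*$ as a subset of size at most $\epsilon_2 n$ in time $2^{H(\epsilon_2)n}\cdot\mathrm{poly}(n)$. When $|S^*|$ is larger, $G$ itself must contain a clique of size $\Omega(n)$; Property~(2) together with the Moon--Moser bound $3^{n/3}\approx 1.44^n$ on the number of maximal cliques keeps the enumeration sub-$2^n$. For each candidate $S$, the task is to find the best partition $V(G)\setminus S=X_1\cup X_2\cup D$ with no $G$-edges between $X_1$ and $X_2$ and with $\alg(G[X_i\cup S],S)=\mathrm{YES}$ for both sides; Property~(4) then certifies that $G[X_1\cup S\cup X_2]\in\gclass$. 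To find this partition, I would combine Proposition~\ref{le:connectedComp}, used to enumerate connected subsets of $V(G)$ whose boundary lies in $S$ (essentially enumerating one half around $S$ with small overhead), with Proposition~\ref{lem:2table}, used to combine compatible enumerations of the two halves via a meet-in-the-middle keyed by $S$; the size cap $|X_i|\leq \tfrac{2}{3}n$ bounds each half's enumeration and the 2-Table pays a sum rather than a product, which is where the sub-$2^n$ saving materializes.

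The main obstacle is the bookkeeping: balancing the separator enumeration cost $2^{H(\epsilon_2)n}$, the per-side enumeration capped by $2^{H(2/3)n}\approx 2^{0.92n}$, the large-clique enumeration $3^{n/3}$, and the brute-force tails $2^{H(\epsilon_1)n}$ so that a single $\lambda<1$ dominates every branch. A further subtlety is that the partition $V(G)\setminus S=X_1\cup X_2\cup D$ is not dictated by the $G$-components of $G\setminus S$: within a single $G$-component one is free to put some vertices in $X_1$, some in $X_2$, and some in $D$, as long as no $G$-edge crosses between $X_1$ and $X_2$. Describing this freedom efficiently requires a careful guess of how components split, and exploits the connected-set enumeration of Proposition~\ref{le:connectedComp} applied within each $G$-component; combined with the global $\alg$-consistency check on each side, this is the technical heart of the algorithm, and it is precisely for this step that Property~(4) (robustness of $\gclass$ under gluing along $S$) is indispensable.
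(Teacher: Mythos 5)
Your high-level outline correctly identifies the three main tools — guessing the clique separator, using Proposition~\ref{le:connectedComp} to enumerate connected sets, and using Proposition~\ref{lem:2table} to combine two sides — and you rightly flag that Property~(4) is what makes the recombination sound. However, the parts you explicitly defer (``the technical heart'') are where the real difficulty lies, and the paper's proof contains two essential ideas that your sketch does not account for and that cannot be supplied by routine bookkeeping.

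First, your handling of large $S^*$ does not work as stated. Saying ``Property~(2) together with the Moon--Moser bound on maximal cliques keeps the enumeration sub-$2^n$'' misses the point: Moon--Moser bounds the number of maximal cliques, but after fixing a large clique you still have $2^{|V\setminus K|}$ ways to choose the rest of the solution, and that term is not addressed. The paper instead fixes a small threshold $\alpha$ and argues in Case~A that if $G$ contains a clique $K$ with $|K|\geq\alpha n$, one can brute-force over $V\setminus K$ (cost $2^{(1-\alpha)n}$) and then, for each such guess $P$, decide the best extension $P\cup X$ with $X\subseteq K$ in time $\cOs(2^{\kappa_0|K|})$ for $\kappa_0<1$. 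That inner sub-$2^{|K|}$ step is Lemma~\ref{lem:chordal_plus_clique}, and it is exactly where Property~(2) (no $(\cbound+1)$-clique in any forbidden induced subgraph) is used: membership of $G[P\cup X]$ in $\gclass$ reduces to a hypergraph clique problem with hyperedges of size at most $\cbound$. Your proposal never uses the $\cbound$ bound, so it cannot beat $2^{|K|}$ inside the clique, and the product $2^{(1-\alpha)n}\cdot 2^{|K|}$ is $2^n$.

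Second, and more seriously, your main case silently assumes $X_1$ and $X_2$ are enumerable cheaply by Proposition~\ref{le:connectedComp}. That proposition gives a bound $\binom{b+f}{b}$ for connected sets with $b$ vertices and boundary size $f$, which is sub-$2^{b+f}$ only when $b$ and $f$ are of noticeably different sizes; when $b\approx f$ the bound is essentially $4^b$, which is useless. The paper therefore splits into three subcases depending on how $|\Nn(X_i)|$ compares to $|X_i|$ and how big $\Nn(X_1)\cap\Nn(X_2)$ is; only in the balanced case (B.1.3), after guessing the small sets $U_\both$ and $U_\none$, does the \twotable trick apply, because then $V(G')\setminus U$ is an honest disjoint union of $\Nn[X_1]$ and $\Nn[X_2]$ — this disjointness is what makes the sum-to-all-ones encoding of 2-Table correct. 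Moreover Proposition~\ref{le:connectedComp} enumerates \emph{connected} sets; when $H\setminus S$ has many components one must first guess a representative per component, which is cheap only when the number of components is small. The case of many components (Branch~B.2) is a separate regime handled by a degree-reduction branching and a greedy exchange argument (Claim~\ref{cl:greedy}: if too many unresolved vertices were outside $Y$, one could discard the small clique $S$ and insert a large independent set $T$ to get a strictly better solution, contradicting optimality). None of this appears in your sketch; you flag the component-splitting freedom as a subtlety but propose no mechanism to resolve it. As written, the proposal reduces the theorem to another open-ended design task rather than proving it.

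\end{document}
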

As we already observed, chordal and interval graphs satisfy Properties (1)-(4). Thus Theorem~\ref{thm:chord_2n} implies immediately results claimed in the introduction. 
Our approach is based on a thorough investigation of the structure of a maximum induced subgraph. In each of the cases, we deploy a different strategy to identify possible suspects for an optimal solution. The properties we strongly rely on are the balanced separation property of chordal graphs (Property (4)), and conditions on minimal forbidden induced subgraphs for $\gclass$ (Property (2)).   

Let $G=(V,E)$. In the description of the algorithm we use several small positive constants: $\alpha,\beta,\gamma,\delta,\varepsilon$, and one large constant $L$. The final constant $\lambda$ depends on the choice of $\alpha,\beta,L,\gamma,\delta,\varepsilon$; during the description we make sure that constants $(\alpha,\beta,L,\gamma,\delta,\varepsilon)$ can be chosen so that $\lambda<1$. The choice of each constant depends on the later ones, e.g., having chosen $L,\gamma,\delta,\varepsilon$, we may find a positive upper bound on the value of $\beta$ so that we may choose any positive $\beta$ smaller than this upper bound. For reader's convenience, in Appendix~\ref{app:choice} we give a summary of the procedure of choosing constants.

Firstly, we observe that by Proposition~\ref{lemma:finite_deletion},  we may assume that the input graph does not contain any forbidden induced subgraph from $\F_\gclass$ of size at most $\ell$ for some constant $\ell$, to be determined later. Indeed, if we are able to find an algorithm for maximum induced $\gclass$-subgraph running in $\cOs(2^{\lambda n})$ time for some $\lambda<1$ and working in $\F'_\gclass$-free graphs, where $\F'_\gclass$ consists of graphs of $\F_\gclass$ of size at most $\ell$, then by Proposition~\ref{lemma:finite_deletion} we obtain an algorithm for maximum induced $\gclass$-subgraph working in general graphs and with running time $\cOs(2^{\lambda' n})$ for some $\lambda'<1$. Hence, from now on we assume that the input graph $G$ does not contain any forbidden induced subgraph from $\F_\gclass$ of size at most $\ell$.

The algorithm performs a number of {\emph{steps}}. After each step, depending on the result, the algorithm chooses one of the subcases.

\smallskip\noindent\textbf{Step 1.} \emph{Using the algorithm of Robson~\cite{Robson86}, in $\cOs(2^{0.276n})$ time find the largest clique $K$ in $G$.}\smallskip

We consider two cases: either $K$ is large enough to finish the search directly, or $K$ is small and we have a guarantee that the maximum induced $\gclass$-graph we are looking for contains only small cliques. The threshold for small/large is $\alpha n$ for a constant $\alpha > 0$, $\alpha < 1/48$, to be determined later.

\smallskip\noindent\textbf{Case~A:} \emph{$|K|\geq \alpha n$.}\smallskip

We show that in this case, the problem can be solved in $\cOs(2^{(1-(1-\kappa_0)\alpha)n})$ time for some $\kappa_0<1$ depending only on $\cbound$. We use the following auxiliary claim. 

\begin{lemma}\label{lem:chordal_plus_clique}
Let $P$ be a subset of vertices of an $n$-vertex graph $G$ that induces a graph belonging to $\gclass$, and let $K$ be a clique in $G$ such that 
$P\cap K=\emptyset$. Then in time $\cOs(2^{\kappa_0\cdot |K|})$ for some $\kappa_0<1$ depending only on $\cbound$ it is possible to find an induced subgraph of $G$ with the maximum number of vertices, where maximum is taken over all induced subgraphs $H$ of $G$ such that {\emph{(i)}} $H\in \gclass$, {\emph{(ii)}} $V(H)\setminus K=P$. In other words, the maximum is taken over all induced subgraphs belonging to $\gclass$ which can be obtained by adding some vertices of $K$ to $P$.  
\end{lemma}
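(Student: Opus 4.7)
\textbf{Plan for Lemma \ref{lem:chordal_plus_clique}.} The strategy is to reduce the choice of $K'\subseteq K$ to an instance of $\cbound$-\textsc{Hitting Set} on the ground set $K$. The main leverage comes from Property~(2): every minimum forbidden induced subgraph from $\F_\gclass$ has clique number at most $\cbound$, and since $K$ is a clique, any obstruction to $G[P\cup K']$ being in $\gclass$ can involve at most $\cbound$ vertices from~$K'$.

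The first step is to establish the following structural equivalence: for any $K'\subseteq K$, the graph $G[P\cup K']$ belongs to $\gclass$ if and only if $G[P\cup T]$ belongs to $\gclass$ for every $T\subseteq K'$ with $|T|\leq \cbound$. The forward implication is immediate from heredity (Property~(1)). For the converse, suppose $G[P\cup K']\notin\gclass$ and take a vertex-minimum forbidden induced subgraph $F\in\F_\gclass$ inside $G[P\cup K']$. By Property~(2) the clique number of $F$ is at most $\cbound$, so $T:=V(F)\cap K'$ has size at most $\cbound$. But then $F$ is an induced subgraph of $G[P\cup T]$, and heredity forces $G[P\cup T]\notin\gclass$, contradicting the assumption. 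Note that here we use $P\in\gclass$ only implicitly: if the obstruction is contained entirely in $P$, then already $G[P]\notin\gclass$, so the branch can simply be discarded at the outset.

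Armed with this equivalence, the algorithm first enumerates in polynomial time the family
\[
  \mathcal{T}\;=\;\bigl\{\,T\subseteq K \;:\; |T|\leq \cbound \text{ and } G[P\cup T]\notin \gclass\,\bigr\},
\]
of which there are only $\cO(|K|^{\cbound})$ candidates, each testable in polynomial time by Property~(3). The task is then to find a largest $K'\subseteq K$ that contains no member of $\mathcal{T}$ as a subset; equivalently, a smallest $X\subseteq K$ hitting every element of $\mathcal{T}$, and to return $K':=K\setminus X$. This is an instance of $\cbound$-\textsc{Hitting Set} on a universe of size $|K|$, solvable in time $\cOs(\kappa_0^{|K|})$ for some $\kappa_0<1$ depending only on $\cbound$~\cite{FominGKLS10,Gaspers:2008rf}.

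The only subtlety worth flagging is that we really do need the reduction to $\cbound$-bounded obstructions, rather than to arbitrary minimum forbidden subgraphs, because $\F_\gclass$ can be infinite (e.g.\ all long cycles for chordal graphs); Property~(2) is what makes the hitting-set formulation finite and enumerable. Everything else is a black-box invocation: polynomial-time recognition for building $\mathcal{T}$, and the off-the-shelf $\cbound$-\textsc{Hitting Set} subroutine for solving it.
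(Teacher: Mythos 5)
Your proposal is correct and takes essentially the same approach as the paper: both prove the same structural equivalence (using Property (2) and the fact that $K$ is a clique to bound obstructions meeting $K$ to at most $\cbound$ vertices) and then reduce the choice of $K'\subseteq K$ to a finite combinatorial problem over size-$\leq\cbound$ subsets of $K$, the paper via an explicit branching and you via a black-box $\cbound$-\textsc{Hitting Set} solver. One small typo: the running time of the Hitting Set step should read $\cOs(2^{\kappa_0|K|})$ for some $\kappa_0<1$, not $\cOs(\kappa_0^{|K|})$.
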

\begin{proof}
For every nonempty subset $W$ of $K$ of size at most $\cbound$, we colour $W$ red if $G[W\cup P]\in \gclass$. Note that this construction may be performed using at most $\cbound\cdot|K|^\cbound$ tests of belonging to $\gclass$, hence in polynomial time for constant $\cbound$.

We observe that for every subset $X\subseteq K$, $G[P\cup X]$ belongs to $\gclass$ if and only if all nonempty subsets of $X$ of size at most $\cbound$ are red. Indeed, if the latter is not the case, there is a subset $W\subseteq X$ such that $G[P\cup W]\notin \gclass$, so by Property (1) $G[P\cup X]\notin \gclass$ as well. For the opposite direction, let us assume that $G[P\cup X]$ contains some forbidden induced subgraph $F\in \F_\gclass$. Then $|F\cap X|>\cbound$ because otherwise, by the definition of the colouring, $F\cap X$ would not be coloured red. But since $X$ is a clique, we conclude that $F$ contains a clique on $\cbound+1$ vertices, which is a contradiction with Property (2).

Hence, to obtain a maximum subgraph one has to find a maximum subset of $X$ such that all its subsets of size at most $\cbound$ are coloured red. This is equivalent to finding a maximum clique in a hypergraph with hyperedges of cardinality at most $\cbound$, which can be done using a  branching algorithm in $\cOs(2^{\kappa_0\cdot |K|})$ time for some $\kappa_0<1$, depending only on $\cbound$.

The branching algorithm maintains two disjoint sets of vertices $A,D$, at the beginning equal to $\emptyset$. Set $A$ consists of vertices that are guessed to be in the solution, while $D$ consists of vertices guessed to not be in the solution. The algorithm terminates the branch when $K\setminus D$ does not have any subset of size at most $\cbound$ not coloured red, and in this case $K\setminus D$ is produced as a candidate for the optimum $X$; the optimum $X$ is found as the largest set among the candidates. If the branch is not terminated, we infer that there must be a subset $W\subseteq K\setminus D$ of size at most $\cbound$ which is not coloured red. Clearly, at least one of the vertices of $W$ cannot be in the optimum $X$, hence we examine $W\setminus A$ and branch into $2^{|W\setminus A|}-1$ cases, in each fixing a different choice which vertices of $W\setminus A$ should go to $A$ and which should go to $D$; the omitted case is when all the considered vertices go to $A$. As $|W\setminus A|\leq \cbound$, we have that $2^{|W\setminus A|}-1\leq 2^{\kappa_0\cdot |W\setminus A|}$ for some $\kappa_0<1$ depending only on $\cbound$. Hence, we are able to fix alignment of $|W\setminus A|$ vertices by creating at most $2^{\kappa_0\cdot |W\setminus A|}$ branches, and the total running time $2^{\kappa_0\cdot |K|}$ follows.\end{proof}

We now do the following. Let $H$ be a maximum induced subgraph of $G$ belonging to $\gclass$. We branch into at most $2^{|V\setminus K|}$ subcases, in each fixing a different subset $P$ of $V\setminus K$ as $V(H)\setminus K$; we discard all the branches where the subgraph induced by $P$ does not belong to $\gclass$. For each branch, we use Lemma~\ref{lem:chordal_plus_clique} to find a maximum induced $\gclass$-subgraph which can be obtained from the guessed subset by adding vertices of $K$. This takes time $\cOs(2^{\kappa_0\cdot |K|})$ for each branch. Thus the running time in this case is $\cOs(2^{|V\setminus K|}\cdot 2^{\kappa_0\cdot |K|} )\leq \cOs(2^{(1-\alpha)n}\cdot 2^{\kappa_0\cdot \alpha n} )=\cOs(2^{(1-(1-\kappa_0)\alpha)n})$. Note that $1-(1-\kappa_0)\alpha<1$ for $\alpha>0$ and $\kappa_0<1$.

\smallskip\noindent\textbf{Case~B:} \emph{$G$ has no clique of size $\alpha n$.}\smallskip

Firstly, we search for solutions that have at most $n/2 -\beta n$ or at least $n/2 +\beta n$ vertices for some $\beta$, $0<\beta <1/16$, to be determined later. For this, we may apply a simple brute-force check that tries all vertex subsets of size at most $\lceil n/2-\beta n \rceil$ or at least $\lfloor n/2+\beta n \rfloor$ in time $\cOs(\binom{n}{\lceil n/2-\beta n\rceil})$; note that for $\beta>0$ it holds that $\cOs(\binom{n}{\lceil n/2-\beta n\rceil})\leq \cOs(2^{\kappa_0n})$ for some $\kappa_0<1$ depending on $\beta$.

\smallskip\noindent\textbf{Step 2.} \emph{Iterate through all subsets of vertices of size at most $n/2-\beta n$ or at least $n/2+\beta n$, and for each of them check if it induces a graph belonging to $\gclass$. If some subset of size at least $\lceil n/2+\beta n\rceil$ induces a $\gclass$-graph, output the subgraph induced by any of such subsets of maximum cardinality, and terminate the algorithm. If no subset of size exactly $\lfloor n/2-\beta n \rfloor$ induces a $\gclass$-graph, output the subgraph induced by the maximum size subset inducing a $\gclass$-graph among those of size at most $\lfloor n/2-\beta n \rfloor$, and terminate the algorithm.}\smallskip

Correctness of Step~2 is obvious. If execution of Step~2 did not terminate the algorithm, we know that the cardinality of the vertex set of a maximum induced subgraph belonging to $\gclass$ is between $n/2-\beta n$ and $n/2+\beta n$. We proceed to further steps with this assumption.

Let $H$ be a maximum induced $\gclass$-subgraph of $G$. We do not know how $H$ looks like and the only information about $H$ we have so far is that
\begin{itemize}
\item[(i)]
  $H$ has no clique of size $\alpha n$, and \item[(ii)] that $n/2 -\beta n \leq |V(H)|\leq n/2 +\beta n$. 
  \end{itemize}
  Let us note that the number of vertices of $G$ not contained in $H$ is also between $n/2 -\beta n$ and $n/2 +\beta n$. 

We now use Property (4) to find a $\frac{2}{3}$-balanced clique separator in $H$. More precisely, there is a clique $S$ in $H$ such that $V(H)\setminus S$ may be partitioned into sets $X_1$ and $X_2$ such that
\begin{itemize}
\item[(i)] $\frac{1}{3}|V(H)|-|S| \leq |X_1|, |X_2| \leq \frac{2}{3}|V(H)|$, and 
\item[(ii)] there is no edge between $X_1$ and $X_2$ in $G$.
\end{itemize}
As $S$ is also a clique in $G$, we have that $|S|\leq \alpha n$. Therefore, observe that $|X_1|,|X_2|\geq (\frac{1}{6}-\frac{\beta}{3}-\alpha)n > \frac{1}{8} n$, since $\beta<1/16$ and $\alpha<1/48$.  Property (4) gives us more algorithmic properties of the partition $(X_1,S,X_2)$ of $V(H)$; these properties will be useful later. As $\alpha$ is small, we may afford the following branching step.

\smallskip\noindent\textbf{Step 3.} \emph{Branch into at most $(1+\alpha n)\binom{n}{\alpha n}\cdot (n+1)^2$ subproblems, in each  fixing a different subset of $V$ of size at most $\alpha n$ as $S$, as well as the cardinalities of $X_1$, $X_2$. Discard all the branches where $S$ is not a clique.}\smallskip

\newcommand{\Nn}{N'}

From now on we focus on one subproblem; hence, we assume that the clique $S$ is fixed and the cardinalities of $X_1,X_2$ are known. Let $G'=G\setminus S$; to ease the notation, for $X\subseteq V(G')$ we denote $\Nn[X]=N_{G'}[X]$ and $\Nn(X)=N_{G'}(X)$. We now consider two cases of how the structure of the optimal solution $H$ may look like, depending on how many connected components $H\setminus S$ has. The threshold is $\gamma n$ for a small constant $\gamma>0$ to be determined later.

\smallskip\noindent\textbf{Step 4.} \emph{Branch into two subproblems: in the first branch assume that $H\setminus S$ has at most $\gamma n$ connected components, and in the second branch assume that $H\setminus S$ has more than $\gamma n$ connected components.}\smallskip

In the branches of Step~4 the algorithm checks several cases, and for every case proceeds with further branchings. To ease the description, we do not distinguish these branchings as separate Steps, but rather explain them in the text.

\smallskip\noindent\textbf{Branch~B.1:} \emph{Graph $H\setminus S$ has at most  $\gamma n$ connected components.} \smallskip

We first branch into at most $(n+1)^3$ subproblems, in each fixing the cardinalities of sets $\Nn(X_1)$, $\Nn(X_2)$ and $\Nn(X_1)\cap \Nn(X_2)$ such that $|\Nn(X_1)\cap \Nn(X_2)|\leq |\Nn(X_1)|,|\Nn(X_2)|\leq n-(|S|+|X_1|+|X_2|)$. From now on we assume that these cardinalities are fixed. We consider a few cases depending on the sizes of $\Nn(X_1)$, $\Nn(X_2)$ and $\Nn(X_1)\cap \Nn(X_2)$; in these cases we use small constants $\delta,\varepsilon$, to be determined later.

\smallskip\noindent\textbf{Case~B.1.1:} \emph{$||\Nn(X_1)|-|X_1||\geq \delta n$, or $||\Nn(X_2)|-|X_2||\geq \delta n$.}\smallskip

We concentrate only on the subcase of $||\Nn(X_1)|-|X_1||\geq \delta n$, as the second subcase is symmetric.
As the number of components is small, their approximate location can be guessed at a cost of a small running time overhead as follows. Let $P_1$ be a set of vertices of size at most $\gamma n$ that is constructed by picking one vertex from every component of $G[X_1]=H[X_1]$. We branch into at most $(1+\gamma n)\binom{n}{\gamma n}$ subproblems, in each fixing a different subset of size at most $\gamma n$ as $P_1$.

We add an artificial vertex $v_1$ to $G'$, make it adjacent to all the vertices of $P_1$, and enumerate all vertex sets of the new graph that (i) are connected, (ii) contain $P_1\cup \{v_1\}$, (iii) are of size $|X_1|+1$ and have neighbourhood of size $|\Nn(X_1)|$. By Proposition~\ref{le:connectedComp}, the number of such sets is at most $\binom{|X_1|+|\Nn(X_1)|}{|X_1|}$ and they can enumerated in time $\cOs(\binom{|X_1|+|\Nn(X_1)|}{|X_1|})$; note that here we enumerate candidates for such sets using Proposition~\ref{le:connectedComp} for vertex $v_1$, and filter out all the subsets that do not contain $P$. Clearly, $X_1\cup \{v_1\}$ is among the enumerated sets. 

We therefore branch into at most $\binom{|X_1|+|\Nn(X_1)|}{|X_1|}$ subproblems, in each fixing a different set out of the enumerated ones as $X_1$ (after excluding $v_1$). Moreover, in each subproblem we branch further into at most $\binom{n-|X_1|-|\Nn(X_1)|}{|X_2|}$ subproblems, in each fixing a different subset of $V\setminus (\Nn[X_1]\cup S)$ as $X_2$. For each of these subproblems we check whether $G[X_1\cup X_2\cup S]$ belongs to $\gclass$ in polynomial time.

Thus we obtain at most 
$$(1+\gamma n)\cdot \binom{n}{\gamma n} \cdot\binom{|X_1|+|\Nn(X_1)|}{|X_1|} \cdot\binom{n-|X_1|-|\Nn(X_1)|}{|X_2|}$$
subproblems. Since $||\Nn(X_1)|-|X_1||\geq \delta n$, we infer that $\binom{|X_1|+|\Nn(X_1)|}{|X_1|}\leq \cOs(2^{\kappa_1(|X_1|+|\Nn(X_1)|)})$~for some $\kappa_1<1$, depending on $\delta$. On the other hand, \[\binom{n-|X_1|-|\Nn(X_1)|}{|X_2|}\leq \cOs(2^{n-|X_1|-|\Nn(X_1)|}).\] Since $|X_1|+|\Nn(X_1)|\geq |X_1|\geq \frac{1}{8}n$, we have that in this case \[\binom{|X_1|+|\Nn(X_1)|}{|X_1|} \cdot\binom{n-|X_1|-|\Nn(X_1)|}{|X_2|}=\cOs(2^{\kappa_2 n})\] for some $\kappa_2<1$ depending on $\delta$. Hence, the total number of branches produced by Case B.1.1, including the overheads from guessing $S$ and cardinalities, is bounded by $\cOs( \binom{n}{\alpha n}\cdot \binom{n}{\gamma n} \cdot 2^{\kappa_2 n})$. Given $\kappa_2$, which depends on $\delta$ only, we may choose $\alpha$ and $\gamma$ small enough so that this number is smaller than $\cOs(2^{\kappa_3 n})$ for some $\kappa_3<1$.
 
\smallskip\noindent\textbf{Case~B.1.2:} \emph{Case B.1.1 does not apply, but $|\Nn(X_1)\cap \Nn(X_2)|\geq \varepsilon n$.}\smallskip

We proceed similarly to Case~B.1.1, but we change the strategy for guessing the set $X_2$: instead of a brute-force guess in $V\setminus (\Nn[X_1]\cup S)$, we again make use of Proposition~\ref{le:connectedComp}. Let $P_1,P_2$ be sets of vertices of size at most $\gamma n$ that are constructed by picking one vertex from every component of $G[X_1]=H[X_1]$ and $G[X_2]=H[X_2]$, respectively. Similarly as in the previous case, branch into at most $(1+\gamma n)^2\cdot \binom{n}{\gamma n}^2$ subproblems, in each fixing $P_1$ and $P_2$. Again, we enumerate at most $\binom{|X_1|+|\Nn(X_1)|}{|X_1|}$ candidates for $X_1$ by adding an artificial vertex adjacent to all the vertices of $P_1$, and then we branch into a number of subproblems, in each fixing one of these candidates as $X_1$. We terminate all the branches where $P_2$ and $X_1$ are not disjoint, or there is an edge between them. Note that the total number of created subproblems is at most $\binom{|X_1|+|\Nn(X_1)|}{|X_1|}\leq \cOs(2^{2|X_1|+\delta n})$.

Now consider the graph $G''=G\setminus (\Nn[X_1]\cup S)$. Note that $X_2\subseteq V(G'')$ and the neighbourhood of $X_2$ in $G''$ is of size at most $|\Nn(X_2)|-\varepsilon n$, as at least $\varepsilon n$ vertices from the intersection with $\Nn(X_1)$ have been removed. Therefore, we can add an artificial vertex $v_2$ in $G''$ adjacent to all the vertices of $P_2$, and apply Proposition~\ref{le:connectedComp} to it. Similarly as in the case of $X_1$, we can enumerate at most \[\binom{|X_2|+|\Nn(X_2)|-|\Nn(X_1)\cap \Nn(X_2)|}{|X_2|}\] candidates for the set $ X_2$ in time  \[\cOs\left(\binom{|X_2|+|\Nn(X_2)|-|\Nn(X_1)\cap \Nn(X_2)|}{|X_2|}\right).\] Then we branch further into at most $\binom{|X_2|+|\Nn(X_2)|-|\Nn(X_1)\cap \Nn(X_2)|}{|X_2|}$~subproblems, in each fixing one of the candidates as $X_2$. As $|\Nn(X_2)|\leq |X_2|+\delta n$ and $|\Nn(X_1)\cap \Nn(X_2)|\geq \varepsilon n$, we have that $|X_2|+|\Nn(X_2)|-|\Nn(X_1)\cap \Nn(X_2)|\leq 2|X_2|-(\varepsilon-\delta)n$.

When $X_1$ and $X_2$ are fixed, in polynomial time we check whether the graph $G[X_1\cup X_2\cup S]$ belongs to $\gclass$. Observe that $\cOs(2^{2|X_1|+\delta n})\cdot \cOs(2^{2|X_2|-(\varepsilon-\delta) n})=\cOs(2^{2(|X_1|+|X_2|)-(\varepsilon-2\delta)n})$; moreover, $|X_1|+|X_2|\leq n/2 +\beta n$. Hence, given $\varepsilon>0$ we may choose $\delta$ and $\beta$ small enough so that $\cOs(2^{2|X_1|+\delta n})\cdot \cOs(2^{2|X_2|-(\varepsilon-\delta) n})\leq \cOs(2^{\kappa_4 n})$ for some $\kappa_4<1$ depending on $\varepsilon$. Now observe that the total number of branches produced in Case B.1.4, including overheads from guessing $S$, cardinalities, as well as $P_1$ and $P_2$, is in  $\cOs(\binom{n}{\alpha n}\cdot \binom{n}{\gamma n}^2)\cdot \cOs(2^{\kappa_4 n})$, so given $\kappa_4$ we may choose $\alpha$ and $\gamma$ small enough so that the total number of subbranches produced is at most $\cOs(2^{\kappa_5 n})$ for some $\kappa_5<1$. Every subbranch is then processed in polynomial time.

\smallskip\noindent\textbf{Case~B.1.3:} \emph{None of the cases B.1.1 or B.1.2 applies.}\smallskip

Summarizing, sets $X_1$ and $X_2$ have the following properties:
\begin{itemize}
\item $\frac{1}{6}n - \frac{\beta}{3} n-\alpha n \leq |X_1|,|X_2| \leq \frac{1}{3}n +\frac{2\beta}{3} n$,
\item $\frac{1}{2}n-(\alpha+\beta) n\leq |X_1|+|X_2|\leq \frac{1}{2}n+\beta n$,
\item $||\Nn(X_i)|-|X_i||\leq \delta n$ for $i=1,2$, and $|\Nn[X_1]\cap \Nn[X_2]|\leq \varepsilon n$.
\end{itemize}

\newcommand{\both}{{\textrm{both}}}
\newcommand{\none}{{\textrm{none}}}

Let $U_{\both}=\Nn[X_1]\cap \Nn[X_2]=\Nn(X_1)\cap \Nn(X_2)$, $U_{\none}=V(G')\setminus (\Nn[X_1]\cup \Nn[X_2])$, and $U=U_\both\cup U_\none$. We already know that $|U_\both|\leq \varepsilon n$. We now claim that $|U_\none|\leq \zeta n$, where $\zeta=2\alpha+2\beta+2\delta+\varepsilon$. Indeed, we have that
\begin{eqnarray*}
|U_\none| & = & |V(G')|-|X_1|-|X_2|-|\Nn(X_1)|-|\Nn(X_2)|+|\Nn(X_1)\cap \Nn(X_2)|\\
& \leq & n-2(|X_1|+|X_2|)+2\delta n+\varepsilon n \leq (2\alpha+2\beta+2\delta+\varepsilon) n
\end{eqnarray*}

\begin{figure}[htbp!]
                \centering
                \def\svgwidth{0.6\columnwidth}
                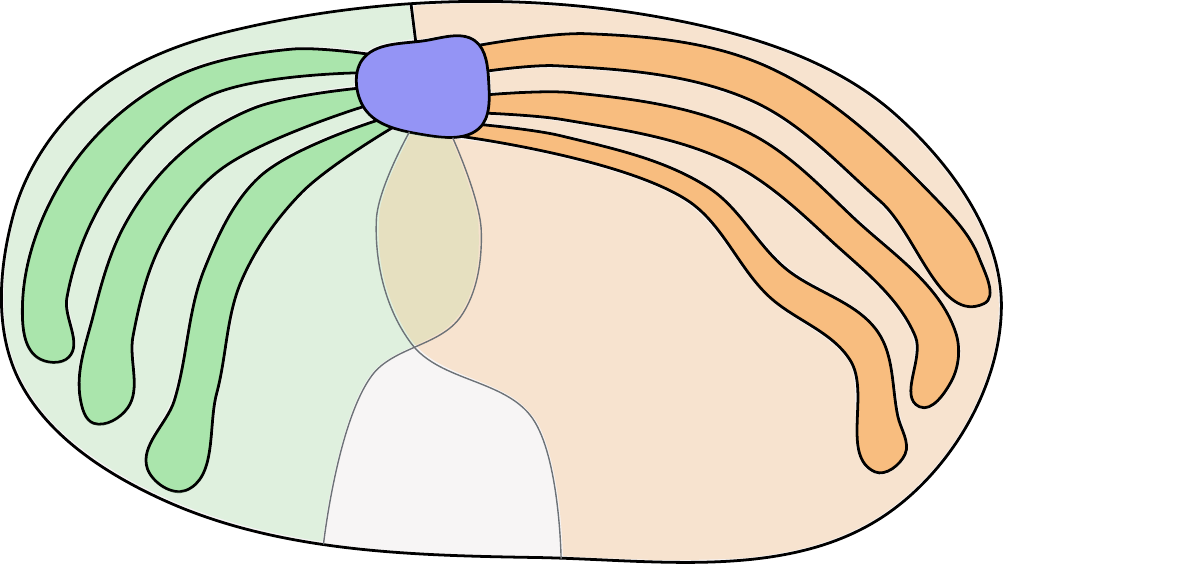
\caption{Situation in Case B.1.3. Neighbourhoods $\Nn(X_1)$ and $\Nn(X_2)$ have almost equal sizes to $X_1$ and $X_2$, respectively, while $U_\both$ and $U_\none$ contain only a tiny fraction of vertices.}\label{fig:large_comp}
\end{figure}

Given that sets $U_\both$ and $U_\none$ are small, we may fix them with $\cOs(\binom{n}{\varepsilon n}\cdot\binom{n}{\zeta n})$ overhead in the running time: we branch into $\cOs(\binom{n}{\varepsilon n}\cdot \binom{n}{\zeta n})$ subproblems, in each fixing a pair of disjoint subsets of $V\setminus S$ of cardinalities at most $\varepsilon n$ and $\zeta n$ as $U_\both$ and $U_\none$, respectively. Note that then $V(G')\setminus U$ is the symmetric difference of $\Nn[X_1]$ and $\Nn[X_2]$; let $I=V(G')\setminus U$. We are left with determining which part of $I$ is in $X_1\cup X_2$, and which is outside.

Observe that every vertex of $I$ is in exactly one of the two sets: $N[X_1]$ or $N[X_2]$. Hence, by Property (4) of $\gclass$, we may look for subsets $X_1,X_2$ of $I$, such that (i) algorithm $\alg$ run on $G[X_1\cup S]$ and $G[X_2\cup S]$ with clique $S$ distinguished provides a positive answer in both of the cases, and (ii) $I$ is a disjoint union of $N[X_1]$ and $N[X_2]$. We model this situation as an instance of the \twotable problem as follows. For $i=1,2$, enumerate all the subsets of $I$ of size $|X_i|$ as candidates for $X_i$, and discard all the candidates for which the algorithm $\alg$ does not provide a positive answer when run on the subgraph induced by the candidate plus the clique $S$. For each remaining candidate subset create a binary vector of length $|I|$ indicating which vertices of $I$ belong to its closed neighbourhood. Construct matrices $T_1,T_2$ by putting the vectors created for candidates for $X_1,X_2$ as columns of $T_1,T_2$, respectively. Now, we need to check whether one can find a column of $T_1$ and a column of $T_2$ that sum up to a vector consisting only of ones. 

As $|X_i| \leq \frac{1}{3}n +\frac{2\beta}{3} n$ for $i=1,2$, we have that tables $T_1,T_2$ have at most $\binom{n}{\frac{1}{3}n +\frac{2\beta}{3} n}$ columns, which is $\cOs(2^{\kappa_6 n})$ for some universal constant $\kappa_6<1$ (recall that $\beta<1/16$, so $\frac{1}{3}n +\frac{2\beta}{3} n<\frac{3}{8}n$). Hence, by Proposition~\ref{lem:2table} we may solve the obtained instance of \twotable in $\cOs(2^{\kappa_6 n})$ time. The total running time used by Case B.1.3, including the overheads for guessing clique $S$, set $U$ and cardinalities, is $\cOs(\binom{n}{\alpha n}\cdot \binom{n}{\varepsilon n}\cdot \binom{n}{\zeta n}\cdot 2^{\kappa_6 n})$; note that we may choose $\alpha,\beta,\delta,\varepsilon$ small enough so that this running time is $\cOs(2^{\kappa_7 n})$ for some $\kappa_7<1$.

\smallskip

\smallskip\noindent\textbf{Branch~B.2:} \emph{Graph $H\setminus S$ has more than $\gamma n$ connected components.}\smallskip

Consider connected components of $H\setminus S$ and fix a large constant $L>2$ depending on $\gamma$, to be determined later. We say that a component containing at most $C=L/\gamma$ vertices is \emph{small}, and otherwise it is \emph{large}. Let $r_\ell$ and $r_s$ be the numbers of large and small components of $H\setminus S$, respectively. The number of vertices contained in large components is hence at least $\frac{L\cdot r_\ell}{\gamma}$. Thus, $\frac{L\cdot r_\ell}{\gamma} \leq n$, $r_\ell\leq \frac{\gamma n}{L}$ and, consequently, $r_s\geq \gamma n-r_\ell\geq \gamma n(1-\frac{1}{L})\geq \frac{\gamma n}{2}$. Since small components are nonempty, they contain at least $\frac{\gamma n}{2}$ vertices in total. 

Let us summarize the situation; see Figure~\ref{fig:large_comp} for reference. The vertices of $V$ can be partitioned into disjoint sets $S$, $X$, $N_X$, $Y$, and $Z$, where 
\begin{itemize}
\item[(i)] $S$ is the clique guessed in Step~3; 
\item[(ii)] $X$ are the vertices contained in large components of $H\setminus S$; 
\item[(iii)] $N_X=\Nn(X)$; 
\item[(iv)] $Y$ are the vertices contained in small components of $H\setminus S$; 
\item[(v)] $Z$ consists of vertices not contained in $H$ and not adjacent to $X$.
\end{itemize}
Note that $V(H)=S\cup X \cup Y$. Unfortunately, even given $X$ and $S$, the algorithm still cannot deduce the solution: we still need to split the remaining part $V\setminus (\Nn[X]\cup S)$ into $Y$ that will go into the solution, and $Z$ that will be left out. However, as we know that $G[X]$ has a small number of components, we can proceed with a branching step that guesses $X$ using Proposition~\ref{le:connectedComp}. Let $P$ be a set of vertices that contains one vertex from each connected component of $G[X]$; we have that $|P|=r_\ell\leq \frac{\gamma n}{L}$. 

\begin{figure}[htbp!]\label{fig:large_comp}
                \centering
                \def\svgwidth{0.6\columnwidth}
                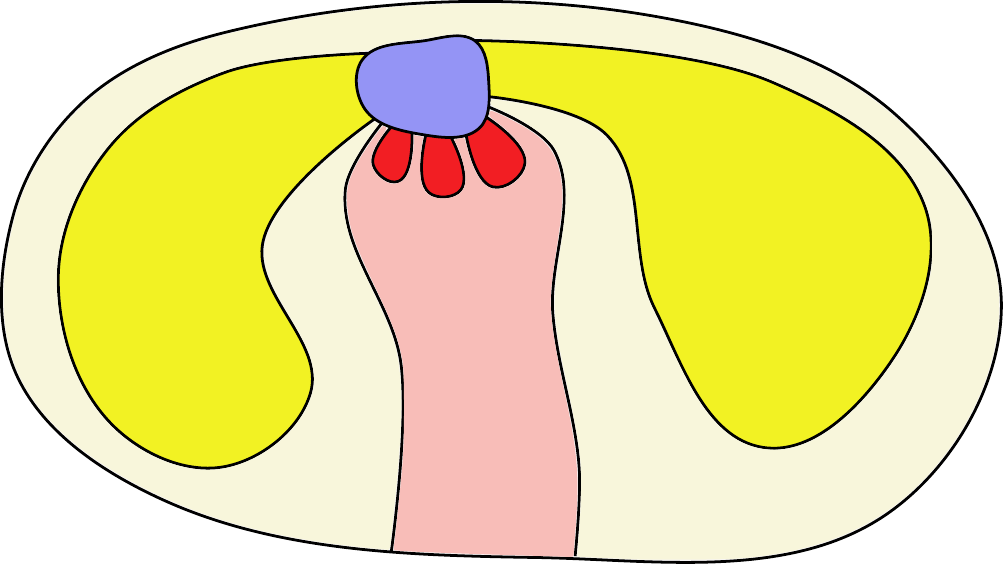
\caption{Situation in Branch B.2. Note that even if we fix $X$ and $S$, the remaining part $V\setminus (\Nn[X]\cup S)$ still needs to be partitioned between $Y$ and $Z$.}\label{fig:large_comp}
\end{figure}

\smallskip\noindent\textbf{Step 5.} \emph{Branch into at most $(n+1)^4$ subbranches fixing $r_\ell,|X|,|Y|,|\Nn[X]|$. Then branch into $\binom{n}{r_\ell}\leq \binom{n}{\frac{\gamma n}{L}}$ cases, in each fixing a different set of size $r_\ell$ as a candidate for $P$. Add an artificial vertex $v_1$ adjacent to $P$, and using Proposition~\ref{le:connectedComp} in $\cOs(\binom{|\Nn[X]|}{|X|})\leq \cOs(2^{|\Nn[X]|})$ time enumerate at most $\binom{|\Nn[X]|}{|X|}\leq 2^{|\Nn[X]|}$ vertex sets that (i) are connected, (ii) contain $P\cup \{v_1\}$, (iii) are of size $|X|+1$ and have neighbourhood of size $|\Nn(X)|$. Note that we can do it by filtering out sets that do not contain $P$ from the list given by Proposition~\ref{le:connectedComp}. As $X\cup \{v\}$ is among enumerated candidates, branch into at most $2^{|\Nn[X]|}$ subcases, in each fixing a different candidate for $X$.}\smallskip

Let $R=G[V\setminus (\Nn[X]\cup S)]$. Note that we need to have $|V(R)|\geq |Y|\geq r_s\geq \frac{\gamma n}{2}$, so if $|V(R)|<\frac{\gamma n}{2}$ then we may safely terminate the branch. We will now use the fact that the input graph does not contain any forbidden induced subgraphs of size bounded by some bound $\ell$; recall that this assumption was justified by an application of Proposition~\ref{lemma:finite_deletion}. We set $\ell=3C^2+1$; hence, whenever we examine an induced subgraph of $G$ of size at most $\ell$, we know that it belongs to $\gclass$. The later steps of the algorithm are encapsulated in the following lemma.

\begin{lemma}\label{le:small-enumeration}
Assuming $\alpha<\frac{\gamma}{104C^3}$ and $\ell=3C^2+1$, there exists a universal constant $\rho<1$ and an algorithm working in $\cOs(2^{\rho |V(R)|})$ time that enumerates at most $\cO(2^{\rho |V(R)|})$ candidate subsets of $V(R)$, such that $Y$ is among the enumerated candidates.
\end{lemma}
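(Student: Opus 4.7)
My plan is a recursive branching algorithm that, step by step, assigns vertices of $V(R)$ to either $Y$ or $Z=V(R)\setminus Y$. At any intermediate point the still-undetermined set is $U\subseteq V(R)$, and the algorithm aims to bound the total number of leaves of the branching tree (each producing one candidate for $Y$) by $\cOs(2^{\rho|V(R)|})$ for some $\rho<1$ depending only on $C$.

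The primary branching rule uses that every connected component of $G[Y]$ has at most $C$ vertices. Whenever $U$ contains a connected induced subgraph $F$ with $|V(F)|=C+1$, one has $V(F)\not\subseteq Y$, so $Y\cap V(F)$ is one of at most $2^{C+1}-1$ proper subsets of $V(F)$. Branching on this choice determines the assignment of $C+1$ vertices simultaneously with branching factor $(2^{C+1}-1)^{1/(C+1)}<2$; accumulated over $k$ invocations, this contributes a factor $(2^{C+1}-1)^{k}$ to the tree size while fixing $(C+1)k$ vertices. Finding such an $F$ amounts to locating a vertex of $U$ whose connected neighbourhood in $G[U]$ has size at least $C+1$, which is done in polynomial time.

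The main obstacle is the terminal regime in which $U$ contains no $(C+1)$-sized connected induced subgraph: every component of $U$ then has size at most $C$, the small-components constraint on $Y\cap U$ becomes vacuous, and a naive brute force on $U$ would cost a prohibitive factor $2^{|U|}$. To drive the branching further in this case, I would invoke the hypothesis that $G$ has no forbidden induced subgraph of size at most $\ell=3C^{2}+1$. Consequently, every $\ell$-vertex induced subgraph of $R$ belongs to $\gclass$ and is in particular chordal, so by the classical fact that a chordal graph on $m$ vertices with no $(C+1)$-clique contains an independent set of size at least $\lceil m/C\rceil$, any $\ell$-subset of $U$ contains either a $(C+1)$-clique $K$ (on which the primary rule still applies via $|Y\cap K|\leq C$) or an independent set $I$ of size $3C+1$. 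An independent set of this size is then combined with the guessed cardinality $|Y|$ from Step~5, with Property~(4) as mediated by the algorithm $\alg$ applied to $G[\,\cdot\,\cup S]$, and with the $\gclass$-consistency of putative small components, so as to prune most assignments of $Y\cap I$ and extract positive progress.

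Combining all branching, the total leaf count is bounded by $(2^{C+1}-1)^{k}\cdot 2^{|V(R)|-(C+1)k}=2^{|V(R)|}\bigl(1-2^{-(C+1)}\bigr)^{k}$, which is $\cOs(2^{\rho|V(R)|})$ for some $\rho<1$ as soon as $k=\Omega(|V(R)|)$. The technical heart of the argument, and the step I expect to be hardest, is to show that the Ramsey-based secondary rule indeed forces $k$ to be a linear fraction of $|V(R)|$ on every root-to-leaf path. This is precisely where the numerical constraint $\alpha<\gamma/(104C^{3})$ plays its role: it caps the maximum clique of $G$ (and hence of $R$) so strictly that, together with the guaranteed lower bound $r_{s}\geq\gamma n/2$ on the number of small components and the constraints on $|X|$, $|N'[X]|$, and $|S|$ fixed in Steps~3 and~5, no root-to-leaf path can linger in the terminal regime on more than a sub-$\rho$ fraction of $V(R)$.
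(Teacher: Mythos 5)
Your proposal has the right high-level framework --- branch to make progress while component sizes can be exploited, then handle a ``terminal'' regime separately --- but it misses the single key idea on which the lemma actually turns, and the part you flag as ``the technical heart'' is where it breaks down.

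The paper's primary branching is by \emph{degree} (a vertex is heavy if $\deg \geq 3C$; if a heavy vertex is in $Y$, at most a third of its neighbours can be in $Y$), not by finding connected $(C+1)$-sets; this choice matters because bounded degree in $R\setminus D$ is exactly what makes the terminal step work. Your primary rule (branch on the intersection of $Y$ with a connected $(C+1)$-vertex set) is sound and does make per-step progress, but it leaves no structural handle on $Q=V(R)\setminus(A\cup D)$ at the leaves beyond ``all components of $G[Q]$ have size $\leq C$,'' which is not by itself enough.

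The real gap is in the terminal regime. The paper's argument there is an \emph{exchange argument against optimality of $H$} (Claim~\ref{cl:greedy}): in the correct branch, if more than $\frac{1}{3}|Q|$ of the unresolved vertices lay outside $Y$, one could greedily extract, using the degree bound $3C$, an independent set $T\subseteq Q\setminus Y$ of size $\geq \frac{\gamma n}{104C^3}$ that is independent and ``far'' from $Y$'s components; then $R[T\cup Y]\in\gclass$ because its components have size $\leq 1+3C^2=\ell$ and $G$ is $\F'_\gclass$-free; and $G[X\cup Y\cup T]$ would be a \emph{larger} $\gclass$-subgraph than $H$ because $|T|>\alpha n\geq |S|$, contradiction. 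This is where the bound $\alpha<\frac{\gamma}{104C^3}$ is actually used --- not to cap clique sizes, but to make the swap $S\mapsto T$ strictly profitable. The conclusion is that at least $\frac{2}{3}$ of $Q$ must be in $Y$, so only $\binom{|Q|}{\leq |Q|/3}=\cOs(2^{\sigma'|Q|})$ candidates need to be enumerated. Your proposal never uses that $H$ is \emph{maximum}; the Ramsey/independent-set step you sketch, combined with Property~(4) and cardinality constraints, is a \emph{feasibility} filter and gives no lower bound on the number of branching steps $k$ along a root-to-leaf path. In the terminal regime nothing you have written forces $k$ to be a linear fraction of $|V(R)|$; it can simply be zero with $|Q|$ a constant fraction of $|V(R)|$, and then your enumeration is $2^{|Q|}$ with no savings. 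So the optimality-exchange idea is precisely the missing ingredient, and without it the proof does not close.
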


Before we proceed to the proof, let us observe that application of Lemma~\ref{le:small-enumeration} finishes the whole algorithm. Indeed, so far in the branching procedure we have an overhead of $\cOs(\binom{n}{\alpha n}\cdot \binom{n}{\frac{\gamma n}{L}}\cdot 2^{|\Nn[X]|})$ for guessing $S$ and $X$. If we now enumerate and examine --- by testing whether $G[X\cup S\cup Y]\in \gclass$ --- all the candidates for $Y$ given by Lemma~\ref{le:small-enumeration}, we arrive at running time \[\cOs(\binom{n}{\alpha n}\cdot \binom{n}{\frac{\gamma n}{L}}\cdot 2^{|\Nn[X]|}\cdot 2^{\rho |V(R)|}).\]

Since $|\Nn[X]|+|V(R)|\leq n$, $\rho<1$ is a universal constant and $|V(R)|\geq \frac{\gamma n}{2}$, given $\gamma>0$ we may choose $L$ to be large enough and $\alpha>0$ to be small enough (and smaller than $\frac{\gamma}{104C^3}$) so that this running time is $\cOs(2^{\kappa_8 n})$ for some $\kappa_8<1$. Here we exploit the fact that $\rho$ does not depend on $\alpha$, $\gamma$ or $L$. Intuitively, what is really happening at this point is that the threshold $C$ for large components depends on $\gamma$ and $L$, and thus the threshold $\ell$ for forbidden induced subgraphs on which we branch a priori using Proposition~\ref{lemma:finite_deletion} depends on $\gamma$ and $L$. This branching, however, is performed outside the current reasoning and we avoid a loop in the definitions of thresholds. 

\newcommand{\Lu}{\mathcal{L}}
\newcommand{\Ls}{\mathcal{L}_{\textrm{small}}}
\newcommand{\Ll}{\mathcal{L}_{\textrm{large}}}

We proceed to the proof of Lemma~\ref{le:small-enumeration}.

\begin{proof}[Proof of Lemma~\ref{le:small-enumeration}]
The initial step is a classical branching algorithm whose goal is to reduce the degrees in $R$. We say that a vertex $v\in V(R)$ is {\emph{heavy}} if $\deg(v)\geq 3C$, and is {\emph{light}} otherwise. The algorithm will produce a number of branches: pairs $(A,D)$, where $A$ is the set of vertices assumed to be contained in the solution, and $D$ is the set of vertices assumed to be excluded from it. Our goal is to get rid of all the heavy vertices, that is, to achieve a situation where all the vertices in $R\setminus D$ are light (where the degrees are counted in $R\setminus D$). The following claim explains all the demanded properties in a formal way.

\begin{mclaim}\label{cl:highdeg-branch}
There exists a universal constant $\sigma<1$ and an algorithm running in time $\cOs(2^{\sigma |V(R)|})$, which outputs a set of pairs $\Lu=\{(A_1,D_1),(A_2,D_2),\ldots,(A_p,D_p)\}$ of disjoints subsets of $V(R)$ with following conditions satisfied:
\begin{itemize}
\item for every pair $(A_i,D_i)$, all vertices of $R\setminus D_i$ are light in $R\setminus D_i$;
\item there is an index $i_0$ such that $A_{i_0}\subseteq Y$ and $D_{i_0}\cap Y=\emptyset$;
\item $\sum_{i=1}^p \phi((A_i,D_i))\leq 2^{\sigma |V(R)|}$, where $\phi$ is a potential function defined as $\phi((A,D))=2^{\sigma |V(R)\setminus (A\cup D)|}$.
\end{itemize}
\end{mclaim}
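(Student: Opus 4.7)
The plan is a bounded-search-tree branching that leverages the following structural observation: every $v \in Y$ belongs to a connected component of $H \setminus S$ of size at most $C$, so at most $C - 1$ of its neighbors in $G$ (hence in $R$) can lie in $Y$. In particular, if $v \in V(R) \setminus D$ is heavy and eventually ends up in $Y$, then among any $3C$ of its neighbors in $V(R) \setminus D$, at most $C - 1$ stay in $Y$ and the remaining $\geq 2C + 1$ must eventually be moved to $D$.

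Starting from $(A, D) = (\emptyset, \emptyset)$, the branching step I propose is as follows. As long as some heavy vertex $v \in V(R) \setminus D$ exists, pick such $v$ together with any $3C$ of its neighbors in $V(R) \setminus D$, denoted $N_v$, and branch on the guess for $Y \cap (\{v\} \cup N_v)$: if $v$ is currently unassigned, produce (a) one subbranch that adds $v$ to $D$ (guessing $v \in Z$), and (b) for every $T \subseteq N_v$ with $A \cap N_v \subseteq T$ and $|T| \leq C - 1$, one subbranch that adds $\{v\} \cup (T \setminus A)$ to $A$ and $N_v \setminus T$ to $D$ (guessing $Y \cap (\{v\} \cup N_v) = \{v\} \cup T$); if $v$ is already in $A$, produce only the type-(b) subbranches. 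Subbranches violating $A \cap D = \emptyset$ are discarded. Correctness of the tree is direct: given the hypothetical optimal $Y$, the ``true'' subbranch $T^\star = Y \cap N_v$ satisfies $|T^\star| \leq C - 1$ by the structural observation and preserves the invariants $A \subseteq Y$, $D \cap Y = \emptyset$, so inductively some leaf $(A_{i_0}, D_{i_0})$ witnesses the second bullet of the claim.

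For the potential bound, it suffices to verify at each branching node the inequality $\sum_c 2^{-\sigma d_c} \leq 1$, where $d_c \geq 0$ is the drop of $|V(R) \setminus (A \cup D)|$ in child $c$. Setting $k = |A \cap N_v|$, each type-(b) child contributes $d_c \geq 3C + 1 - k$ (or $\geq 3C - k$ when $v \in A$), with total type-(b) fanout $\sum_{j=0}^{C-1-k}\binom{3C-k}{j}$. Using the standard entropy bound $\sum_{j \leq m}\binom{n}{j} \leq 2^{n H(m/n)}$, where $H$ is the binary entropy function, and observing that the ratio $(C-1-k)/(3C-k)$ is uniformly at most $1/3$, the type-(b) contribution is bounded by $2^{(3C-k)(H(1/3) - \sigma) - \sigma}$. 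Since $H(1/3) = \log_2 3 - 2/3 < 1$, at $\sigma = 1$ the whole left-hand side is strictly below $1$, and by continuity a universal $\sigma < 1$ works for all $k$ and both modes. Telescoping along the tree then yields $\sum_i \phi((A_i, D_i)) \leq 2^{\sigma |V(R)|}$ and the running-time bound.

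The main delicacy I expect will be this uniform verification across all $k \in \{0, \ldots, C - 1\}$ and across the unassigned-vs.-$A$ dichotomy: as $k$ grows, the per-branch progress shrinks from $3C + 1$ down to $2C + 2$, but the fanout shrinks from $\sum_{j \leq C-1}\binom{3C}{j}$ all the way down to a single branch, and the entropy bound is exactly what captures both effects in a single inequality. Once this key inequality is in place, the three bullets of the claim follow directly: the light-vertex property from the stopping condition of the loop, the inclusion property from the maintained invariant, and the potential bound from the telescoping.
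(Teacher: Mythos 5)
Your proposal is correct and essentially matches the paper's argument: you branch on a heavy vertex together with a chunk of its neighbours, use the observation that a vertex of $Y$ can have at most $C-1$ of its $\geq 3C$ neighbours in $Y$ (so at most a third of the branching set can stay in $A$), and certify the $2^{\sigma|V(R)|}$ bound by a potential-function telescoping with a universal $\sigma<1$. The only cosmetic deviations are that you branch on exactly $3C$ of $v$'s neighbours per round (so $v$ may be revisited) instead of on all of $N_{R\setminus D}(v)\setminus A$ at once as the paper does, and you invoke the binary entropy bound where the paper uses its ad-hoc Fact about subsets of size at most $n/3$.
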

\begin{proof}[Proof of Claim~\ref{cl:highdeg-branch}]
The algorithm maintains two disjoint sets $A,D$; $A$ is the set of vertices assumed to be contained in the constructed candidate set, while vertices of $D$ are assumed to be excluded from the constructed candidate set. Naturally, we begin with $A=D=\emptyset$. The algorithm stops branching when it finds out that $R\setminus D$ contains only light vertices. Thus, the output of the branching algorithm is a set of leaf branches $(A_i,D_i)$ where $R\setminus D_i$ only contains light vertices. During branching we ensure the property that there is at least output branch $(A_i,D_i)$ such that $A_i\subseteq Y$ and $D_i\cap Y=\emptyset$; to express this property, we will also say that the branching is {\em{correct}}.

The progress of the algorithm is measured by the potential function $\phi((A,D))=2^{\sigma |V(R)\setminus (A\cup D)|}$ for some universal constant $\sigma<1$ to be determined later. At each branching step we will ensure that the sum of potentials in subbranches is at most the potential of the initial branch. As the potential is always at least $1$, we will produce at most $2^{\sigma |V(R)|}$ leaf branches $(A_i,D_i)$ in total, and their total sum of potentials will be at most $2^{\sigma |V(R)|}$. Each branching step will be performed in polynomial time, so the whole branching algorithm runs in $\cOs(2^{\sigma |V(R)|})$ time.

If $R\setminus D$ does not contain any heavy vertex, we terminate the branching procedure and output the current pair $(A,D)$. Otherwise, graph $R\setminus D$ contains some heavy vertex $v$. For simplicity, assume for now that $v\notin A$. As $R[Y]$ has all the connected components of size at most $C$, we infer that if $v\in Y$, then at most a third of the neighbours of $v$ in $R\setminus D$ can belong to $Y$. Hence, we can afford the following branching step. We branch into a number of subcases. In one subcase, $v$ is assigned to $D$. In the other subcases, $v$ is assigned to $A$ and alignment of all the vertices of $N_{R\setminus D}(v)\setminus A$ is guessed in such a manner that $v$ has less than $C$ neighbours in $A$. As the neighbours of $v$ in $R\setminus D$ can be only in $A$ or neither in $A$ nor in $D$, at most a third of neighbours contained in $N_{R\setminus D}(v)\setminus A$ can go in this manner to $A$. Note here that if $N_{R\setminus D}(v)\setminus A$ is empty, this means that $v$ has already at least $C$ neighbours in $A$ and we may safely terminate the branch. The correctness of the presented branching rules follow directly from the fact that all the connected components of $R[Y]$ are of size at most $C$.

The following combinatorial bound will be useful when controlling the behaviour of the potential.

\begin{mfact}\label{cl:third}
If $|M|=n$, then the number of subsets of $M$ of size at most $n/3$ is bounded by $2^{\sigma'\cdot n}$ for some universal constant $\sigma'<1$. 
\end{mfact}

In fact we can choose $2^{\sigma'}=1.89$. By Fact~\ref{cl:third}, in order to prove that the total potential of resulting instances is at most the initial potential, it suffices to check that for $n=|V(R)\setminus (A\cup D)|$ and $m=|N_{R\setminus D}(v)\setminus A|\geq 1$ it holds that
$$2^{\sigma n} \geq 2^{\sigma (n-1)} + 2^{\sigma'\cdot m}\cdot 2^{\sigma(n-1-m)}.$$
This is however equivalent to
$$2^\sigma \geq 1+2^{m(\sigma'-\sigma)}.$$

Let us choose $1>\sigma>\sigma'$ so that $2^{\sigma}(2^{\sigma}-1)\geq 2^{\sigma'}$. Observe that this can be done since function $f(t)=2^t(2^t-1)$ is continuous and strictly increasing in the neighbourhood of $1$, and $f(1)=2>2^{\sigma'}$. Then
$$2^{\sigma} \geq 1+2^{\sigma'-\sigma}\geq 1+2^{m(\sigma'-\sigma)},$$
since $m\geq 1$ and $\sigma'<\sigma$.

In the remaining case when $v\in A$, we simply omit the branch when $v$ is assigned to $D$. Hence, to bound the total potential of obtained subbranches, we need to check that 
$$2^{\sigma n} \geq 2^{\sigma'\cdot m}\cdot 2^{\sigma\cdot (n-m)},$$
which follows from the fact that $\sigma'<\sigma$. This completes the proof of Claim~\ref{cl:highdeg-branch}.
\cqed\end{proof}

We proceed with the proof of Lemma~\ref{le:small-enumeration}. Let
$$\Lu=\{(A_1,D_1),(A_2,D_2),\ldots,(A_p,D_p)\}$$
be the set of pairs produced by Claim~\ref{cl:highdeg-branch}. We know that (i) $\sum_{i=1}^p \phi((A_i,D_i))\leq 2^{\sigma |V(R)|}$, (ii) for every $i$ all the vertices in $R\setminus D_i$ are light, and (iii) there exists an index $i_0$ such that $A_{i_0}\subseteq Y$ and $D_{i_0}\cap Y=\emptyset$. Let $\Ls$ be the subset of $\Lu$ consisting of pairs $(A_i,D_i)$ such that $|A_i|+|D_i|\geq \frac{|V(R)|}{4}$, and $\Ll$ be the subset of remaining instances from $\Lu$. Now, for every pair $(A,D)\in \Lu$ we produce a number of candidates for $Y$. We handle lists $\Ls$ and $\Ll$   differently.

For every pair $(A,D)\in \Ls$ we proceed by brute force. As the final candidates for $Y$, we output all the sets of form $A\cup Y'$, where $Y'$ is a subset of $V(R)\setminus (A\cup D)$. Clearly, if $(A_{i_0},D_{i_0})\in \Ls$, then $Y$ is among the output candidates. We now estimate how many candidates have been output. 

For $(A,D)\in \Ls$, let $m=|V(R)\setminus (A\cup D)|$. Thus, for $(A,D)$ we produce exactly $2^m$ candidates. Since $m\leq \frac{3}{4}|V(R)|$, we have that $2^m=2^{\sigma m}\cdot 2^{(1-\sigma)m}\leq \phi((A,D))\cdot 2^{\frac{3}{4}|V(R)|(1-\sigma)}$. Hence, the total number of candidates produced for $\Ls$ is at most
$$\sum_{(A,D)\in \Ls} \phi((A,D))\cdot 2^{\frac{3}{4}|V(R)|(1-\sigma)}=2^{\sigma |V(R)|+\frac{3}{4}|V(R)|(1-\sigma)}=2^{\frac{3+\sigma}{4}|V(R)|}.$$
Note that $\frac{3+\sigma}{4}<1$ for $\sigma<1$.

We finally proceed to the pairs from $\Ll$. Let $(A,D)\in \Ll$, and let $Q=V(R)\setminus (A\cup D)$. The following claim is the crucial step in our reasoning:

\begin{mclaim}\label{cl:greedy}
If $(A,D)\in \Ll$ is such that $A\subseteq Y$ and $D\cap Y=\emptyset$, then we have that $|Q\cap Y|\geq \frac{2}{3}|Q|$.
\end{mclaim}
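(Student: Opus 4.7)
The plan is to begin by rewriting the target $|Q\cap Y|\geq \frac{2}{3}|Q|$ in the equivalent algebraic form
\[
|Y|-|A|\geq 2(|Z|-|D|),
\]
using the hypotheses $A\subseteq Y$ and $D\cap Y=\emptyset$, which together force $D\subseteq Z$ and yield $Y\cap Q = Y\setminus A$ and $Z\cap Q = Z\setminus D$. In this form both sides have a clean interpretation in terms of the branching procedure of Claim~\ref{cl:highdeg-branch}: $|Y|-|A|$ counts the $Y$-vertices still living in $Q$, and $|Z|-|D|$ counts the $Z$-vertices still living in $Q$.

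The key structural inputs I would use are (a)~the light-degree guarantee from Claim~\ref{cl:highdeg-branch}, namely that every vertex of $R\setminus D$ has degree strictly less than $3C$ in $R\setminus D$, and (b)~the fact that the connected components of $R[Y]$ are precisely the small components of $H\setminus S$, each of size at most $C$, since $G[Y]=H[Y]=R[Y]$. Together these imply that every $y\in Y$ has at most $C-1$ neighbours in $Y$, so any branching step performed on a heavy $v\in Y$ along the branch consistent with $H$ must add at most $C$ vertices to $A$ (namely $v$ and its $Y$-neighbours in its small component), while forcing at least $2C+1$ neighbours of $v$ from $Z$ into $D$. A branching step on a heavy $v\in Z$ adds a single vertex to $D$. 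Summing along the branching path that produced $(A,D)$ gives the amortised bound $|D|\geq \tfrac{2C+1}{C}|A|>2|A|$, and combined with the $\Ll$-hypothesis $|A|+|D|<|V(R)|/4$ this pins $|A|$ below $|V(R)|/12$.

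To close the argument I would decompose $R\setminus D$ into its connected components and charge each $z\in Z\cap Q$ to small $R[Y]$-components it touches. For a mixed component $K$ of $R\setminus D$ containing at least one vertex of $Z\cap Q$, the light-degree bound applied to $K\cap Z\cap Q$ controls the number of small $R[Y]$-components appearing in $K$ by at most $3C\cdot|K\cap Z\cap Q|$ via the bipartite auxiliary graph whose nodes are $Z\cap Q\cap K$ and the small $Y$-components in $K$. Pure-$Y$ components of $R\setminus D$ contribute directly to $|Y\cap Q|$. The final step is then a careful averaging that combines these per-component bounds with the branching ratio $|D|\geq 2|A|$ and the cap $|A|+|D|<|V(R)|/4$ to derive the target inequality.

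The hard part, and the main obstacle I foresee, is the counting step that bridges the gap between the local, per-component bounds and the global inequality $|Y|-|A|\geq 2(|Z|-|D|)$. In particular, one must rule out the scenario of $R\setminus D$ being dominated by mixed components that are heavy in $Z\cap Q$ relative to their $Y$-content. The expected resolution is that any such $Z$-heavy configuration would have required many branching decisions to reach (since each $Z$-neighbour of a heavy $Y$-vertex is absorbed into $D$), driving $|A|+|D|$ above the $|V(R)|/4$ threshold and contradicting $(A,D)\in\Ll$.
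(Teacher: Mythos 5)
Your proposal misses the decisive idea and cannot be completed along the route you sketch. The paper proves Claim~\ref{cl:greedy} by an \emph{exchange argument exploiting the optimality of $H$}: assuming $|Q\cap Y|<\tfrac{2}{3}|Q|$, one has $|Q\setminus Y|>\tfrac{1}{4}|V(R)|\geq\tfrac{\gamma n}{8}$; one then greedily extracts a set $T\subseteq Q\setminus Y\subseteq Z$ that is independent, hits each small component of $R[Y]$ at most once, and has size at least $\tfrac{\gamma n}{104 C^3}$ (here the light-degree bound and the $\leq C$ component-size bound are used exactly once, to bound the number of vertices ``spoiled'' per greedy pick by $13C^3$). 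The key facts then invoked are that $R[T\cup Y]$ has all components of size at most $1+3C^2=\ell$ and $G$ has no forbidden induced subgraph of size at most $\ell$ (so $R[T\cup Y]\in\gclass$), and that $\F_\gclass$ consists of connected graphs (so $G[X]\sqcup G[T\cup Y]\in\gclass$). Since $|T|>\alpha n\geq |S|$, replacing $S$ by $T$ yields a larger $\gclass$-subgraph, contradicting optimality of $H$.

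Your approach tries to derive the conclusion purely from the combinatorics of the branching in Claim~\ref{cl:highdeg-branch} (bounding $|A|$, charging $Z\cap Q$ to $Y$-components, etc.). This cannot succeed: the branching procedure is entirely blind to the $Y/Z$ partition and may terminate immediately with $A=D=\emptyset$ if $R$ happens to have all degrees below $3C$ (for instance, if $R$ is a disjoint union of isolated vertices). In that situation $Q=V(R)$ and nothing constrains $|Q\cap Y|/|Q|$ from the branching side; the claim nevertheless holds, but only because a large $Z\setminus D$ would let us pack an independent set $T$ into $Z$ and beat $H$, i.e.\ because of optimality. Your ``expected resolution'' --- that a $Z$-heavy configuration forces many branching steps and so violates the $\Ll$ size cap --- is not correct, since the number of branching steps is governed by degrees in $R$, not by the ratio of $Z$-vertices to $Y$-vertices. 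You also never use the two hypotheses that carry all the weight in the paper's proof: that $G$ is $\F'_\gclass$-free (no small forbidden induced subgraphs, guaranteed in advance by Proposition~\ref{lemma:finite_deletion}) and that $H$ is \emph{maximum}. Any correct proof must use at least the latter, and the paper uses both.

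Two smaller points. The auxiliary bound $|D|\geq\tfrac{2C+1}{C}|A|$ you derive from the ``branch on a heavy $v\in Y$'' step is plausible but irrelevant to the target inequality $|Y|-|A|\geq 2(|Z|-|D|)$; it bounds $|A|$ relative to $|D|$ but says nothing about $|Z|$, which is the quantity that needs to be controlled. And the degree bound from Claim~\ref{cl:highdeg-branch} is $\deg<3C$, not $\deg\leq C-1$ inside $Y$; the ``$\leq C-1$ neighbours in $Y$'' bound comes from the component-size cap on $R[Y]$ and is correct, but it interacts with degrees in $Z$ only through the greedy extraction of $T$, which is the step you are missing.
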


In other words, we may safely assume that in the correct branch at least two thirds of the unresolved vertices must remain in the solution. Before we proceed to the proof of Claim~\ref{cl:greedy}, we present how it will be used to finish the whole algorithm of Lemma~\ref{le:small-enumeration}.

For every pair $(A,D)\in \Ll$ we again proceed by brute force, but we take Claim~\ref{cl:greedy} into consideration as well. That is, we output as candidates all sets of form $A_i\cup Y'$, where $Y'$ is a subset of $Q$ of size at least $\frac{2}{3}|Q|$. By applying Fact~\ref{cl:third} to the complement of $Y'$, we infer that the number of produced choices is at most $2^{\sigma' m}\leq 2^{\sigma m}=\phi((A,D))$, where again $m=|Q|$. Thus, the total number of candidates produced in this manner is at most 
$$\sum_{(A,D)\in \Ll} \phi((A,D))\leq 2^{\sigma |V(R)|}\leq 2^{\frac{3+\sigma}{4}|V(R)|}.$$ 
Concluding, the algorithm will produce at most $2\cdot 2^{\frac{3+\sigma}{4}|V(R)|}$ candidates for $Y$: $2^{\frac{3+\sigma}{4}|V(R)|}$ for $\Ls$ and $2^{\frac{3+\sigma}{4}|V(R)|}$ for $\Ll$. Hence we can take $\rho=\frac{3+\sigma}{4}$. Claim~\ref{cl:greedy} ensures that $Y$ will be among the candidates enumerated for $\Ll$ providing that $(A_{i_0},D_{i_0})\in \Ll$, while we have already argued that $Y$ will be among the candidates enumerated for $\Ls$ providing that $(A_{i_0},D_{i_0})\in \Ls$.

We now proceed to the proof of Claim~\ref{cl:greedy}.

\begin{proof}[Proof of Claim~\ref{cl:greedy}]
Assume for the sake of contradiction that $|Q\cap Y|<\frac{2}{3}|Q|$. Then, since $|Q|\geq \frac{3}{4}|V(R)|$, we have that
$$|Q\setminus Y|>\frac{1}{3}|Q|\geq \frac{1}{4}|V(R)|\geq \frac{\gamma n}{8}.$$
We construct a set $T\subseteq Q\setminus Y$ with the following properties:
\begin{itemize}
\item $T$ is independent in $G$;
\item no two vertices of $T$ are adjacent to the same connected component of $R[Y]$;
\item $|T|\geq \frac{\gamma n}{104C^3}$.
\end{itemize}

The construction of $T$ is performed greedily. We iteratively pick to $T$ an unused vertex $v$ of $Q\setminus Y$ and mark the following vertices of $Q\setminus Y$ as used: (i) $v$ itself, (ii) all the neighbours of $v$ in $Q\setminus Y$, and (iii) all the vertices of $Q\setminus Y$ that are adjacent to any component of $R[Y]$ adjacent to $v$. Recall that the degrees in $R\setminus D$ are bounded by $3C$ and $Q\subseteq V(R)\setminus D$, so $v$ can have at most $3C$ neighbours in $Q\setminus Y$. For the same reason, $v$ can be adjacent only to at most $3C$ connected components of $R[Y]$. Each of these components is of size at most $C$, and each vertex contained in any such component can be adjacent to only $3C$ vertices of $Q\setminus Y$. In total, the number of vertices marked as used, including $v$ itself, is at most $1+3C+3C\cdot C\cdot 3C\leq 13C^3$. Hence, we can always find an unused vertex for at least $\frac{|Q\setminus Y|}{13C^3}\geq \frac{\gamma n}{104C^3}$ rounds. From the construction it trivially follows that the constructed $T$ has the first two requested properties.

We now claim that $R[T\cup Y]\in \gclass$. Indeed, from the fact that vertices of $T$ have degree at most $3C$ in $R\setminus D$ we infer that the connected components of $R[T\cup Y]$ need to be of size at most $1+3C^2$. If $R[T\cup Y]\notin \gclass$, then there would be a forbidden induced subgraph from $\F_\gclass$ in $R[T\cup Y]$. As all graphs in $\F_\gclass$ are connected, this subgraph would need to be contained in one of the connected components of $R[T\cup Y]$, and hence would be of size at most $1+3C^2$. However, we assumed that $G$ does not contain any graph from $\F_\gclass$ of size at most $\ell=1+3C^2$, a contradiction. Hence $R[T\cup Y]\in \gclass$.

We conclude the proof with the crucial observation. Define another candidate $H'$ for the optimum solution by taking $H'=G[X\cup Y\cup T]$. In other words, we remove the clique $S$ from the solution $H$, and insert the set $T$ instead. Clearly, $H'$ is a disjoint union of graphs $G[X]$ and $G[Y\cup T]$; as both of these graphs belong to $\gclass$, so does $H'$. Moreover, as $|S|=\alpha n<\frac{\gamma n}{104C^3} \leq |T|$, we have that $|V(H')|>|V(H)|$. This is a contradiction with optimality of $H$.
\cqed\end{proof}

As discussed before, Claim~\ref{cl:greedy} finishes the proof of Lemma~\ref{le:small-enumeration}.
\end{proof}

\section{Summary of the order of choice of constants}\label{app:choice}

In this section we give a short summary of the order of choice of constants. In the following, by running time {\emph{faster than}} $2^n$ we mean running time of form $\cOs(2^{\kappa n})$ for some $\kappa<1$.

We first examine Case B.1.3. In this case, the running time is $\cOs(\binom{n}{\varepsilon n}\cdot\binom{n}{\zeta n})\cdot \cOs(2^{\kappa_6 n})$ for some universal constant $\kappa_6<1$ such that $\binom{n}{\frac{3}{8}n}=\cOs(2^{\kappa_6 n})$. Hence, we can find a positive upper bound $\eps_0>0$ on $\alpha,\beta,\delta,\varepsilon$, such that choosing these constants smaller than $\eps_0$ results in Case B.1.3 running faster than $2^n$.

We now proceed with Case B.1.2. We first fix any $\varepsilon>0$ such that $\varepsilon<\eps_0$.  As observed in this case, given $\varepsilon>0$ we can find a positive upper bound $\eps_1<\eps_0$ on $\alpha,\beta,\gamma,\delta$, such that for any choice of $\alpha,\beta,\gamma,\delta$ smaller than $\eps_1$ we obtain running time faster than $2^n$.

We proceed similarly with Case B.1.1. We fix any $\delta>0$ such that $\delta<\eps_1$. Again, as observed in this case, given $\delta>0$ we can find a positive upper bound $\eps_2<\eps_1$ such that choosing $\alpha,\gamma$ to be smaller than $\eps_2$ results in running time faster than $2^n$.

Now we examine Branch B.2. Let us fix any $\gamma>0$ such that $\gamma<\eps_2$. Recall that the running time in this branch was $\cOs(\binom{n}{\alpha n}\cdot \binom{n}{\frac{\gamma n}{L}}\cdot 2^{|\Nn[X]|}\cdot 2^{\rho |V(R)|})$, where $|\Nn[X]|+|V(R)|\leq n$, $|V(R)|\geq \frac{\gamma n}{2}$ and $\rho<1$ is a universal constant. Hence, given $\gamma>0$ we can find a positive lower bound $L_0\geq 2$ on $L$ and a positive upper bound $\eps_3<\eps_2$ on $\alpha$, such that taking any $L>L_0$ and positive $\alpha<\eps_3$ gives us running time faster than $2^n$. We fix any $L>L_0$, and by lowering $\eps_3$ if necessary we ensure that inequality $\eps_3<\frac{\gamma}{104C^3}$ holds, where $C=L/\gamma$. Then we can fix the remaining two constants: we fix $\beta$ to be any positive constant smaller than $\eps_1$ so that Step 2 runs faster than $2^n$, and $\alpha$ to be any positive constant smaller than $\eps_3$. Thus we make sure that in Branches B.1 and B.2 we obtain running time faster than $2^n$.

Since Case A works faster than $2^n$ for any $\alpha>0$, namely in $\cOs(2^{(1-(1-\kappa_0)\alpha)n})$ time for some $\kappa_0<1$ depending only on $\cbound$, we infer that the whole algorithm runs faster than $2^n$. Note however, that we assumed that the algorithm runs on $\F'_\gclass$-free graphs, where $\F'_\gclass$ consists of graphs of $\F_\gclass$ of size at most $\ell$, for $\ell=3C^2+1=3\frac{L^2}{\gamma^2}+1$. Since $\F'_\gclass$ is a finite family of graphs, we can apply Proposition~\ref{lemma:finite_deletion} as described in Section~\ref{sec:algo} before Step 1, and obtain running time faster than $2^n$ for the general problem.


\section{Conclusion}\label{sec:conlusion}
Theorem~\ref{thm:chord_2n} shows that for any class of graphs  $\gclass$  satisfying Properties~(1)--(4), a maximum induced subgraph from  $\gclass$ of an $n$-vertex graph can be found in time $\cOs(2^{\lambda n})$ for some $\lambda<1$. Pipelining Proposition~\ref{lemma:finite_deletion} with Theorem~\ref{thm:chord_2n} shows that we moreover may add any finite family of forbidden subgraphs on top of belonging to $\gclass$. More precisely, we have the following theorem.

\begin{theorem}\label{thm:chord_plus}
Let $\F$ be a finite set of graphs and  $\gclass$ be a class of graphs satisfying Properties (1)--(4).
There exists an algorithm which for a given  $n$-vertex graph $G$, finds a maximum induced   $\F$-free   $\gclass$-graph in $G$  in time $\cOs(2^{\lambda n})$ for some $\lambda<1$, where $\lambda$ depends only on $\cbound$ and $\F$.
\end{theorem}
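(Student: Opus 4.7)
The plan is to derive Theorem~\ref{thm:chord_plus} as an essentially immediate corollary of Theorem~\ref{thm:chord_2n} combined with Proposition~\ref{lemma:finite_deletion}. The key observation is that Proposition~\ref{lemma:finite_deletion} is stated precisely for this situation: given any algorithm solving the maximum induced $\gclass$-subgraph problem on $\F$-free inputs in time $\cOs(2^{\eps n})$ with $\eps<1$, it produces an algorithm for the maximum induced $\F$-free $\gclass$-subgraph problem on general inputs in time $\cOs(2^{\eps' n})$ with $\eps'<1$.

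First I would verify the hypotheses of Proposition~\ref{lemma:finite_deletion}. Setting $\Pi = \gclass$, we need $\gclass$ to be hereditary and polynomial-time recognizable; these are exactly Properties~(1) and (3), so the hypothesis is satisfied. The set $\F$ is finite by assumption, so let $\ell$ denote the maximum number of vertices in any graph from $\F$; this is a constant depending only on $\F$. Next, as the required algorithm $\Aa$ taking an $\F$-free graph as input, I would simply use the algorithm provided by Theorem~\ref{thm:chord_2n}. That algorithm works for arbitrary $n$-vertex input graphs in time $\cOs(2^{\lambda n})$ for some $\lambda<1$ depending only on $\cbound$, so in particular it works on $\F$-free graphs with the same running time bound. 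Thus the hypothesis of Proposition~\ref{lemma:finite_deletion} is met with $\eps = \lambda < 1$.

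Applying Proposition~\ref{lemma:finite_deletion} directly yields an algorithm $\Aa'$ that, given an arbitrary $n$-vertex graph $G$, finds a maximum induced $\F$-free $\gclass$-subgraph of $G$ in time $\cOs(2^{\lambda' n})$, where $\lambda' = \eps_\ell(1-\lambda) + \lambda < 1$ depends only on $\lambda$ (and hence on $\cbound$) and on $\ell$ (and hence on $\F$). This is precisely the statement of Theorem~\ref{thm:chord_plus}, so no further argument is needed.

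There is really no substantial obstacle here, since all of the technical work has already been carried out: Theorem~\ref{thm:chord_2n} supplies the nontrivial algorithm for the pure $\gclass$-problem, and Proposition~\ref{lemma:finite_deletion} supplies the generic branching-plus-brute-force reduction that adds a finite forbidden family on top. The only thing worth double-checking is that the dependence of $\lambda'$ on the parameters is as claimed: the proof of Proposition~\ref{lemma:finite_deletion} expresses $\eps'$ as $\eps_\ell(1-\eps)+\eps$, where $\eps_\ell<1$ depends only on $\ell$, so tracking this through the composition confirms that $\lambda'$ depends only on $\cbound$ and $\F$, as required.
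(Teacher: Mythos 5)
Your proof is correct and matches the paper's own argument exactly: the paper states Theorem~\ref{thm:chord_plus} as an immediate consequence of pipelining Proposition~\ref{lemma:finite_deletion} with Theorem~\ref{thm:chord_2n}, which is precisely what you do. The verification of hypotheses (hereditary via Property~(1), polynomial-time recognizable via Property~(3), and using Theorem~\ref{thm:chord_2n} as the subroutine $\Aa$) and the tracking of the dependence of $\lambda'$ on $\cbound$ and $\F$ are all correct.
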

As  mentioned in the introduction, Theorem~\ref{thm:chord_plus} covers such graph classes as proper interval graphs (claw-free interval graphs), Ptolemaic graphs (chordal and gem-free),  block graphs (chordal and diamond-free), or  proper circular-arc graphs  (chordal, claw-free, and $\bar{S}_3$-free). We refer to \cite{brandstadt1999graph} for the definitions and discussions on these graphs.

In this manner, we hope to provide a new insight into Hypothesis~\ref{conj:main} by considering chordal-like graph classes. So far the research on breaking the $2^n$ barrier for the {\sc{Maximum Induced $\gclass$-Subgraph}} problem concentrated mostly on exploiting sparsity of a graph class, like in~\cite{FominGPR08-On,FominTV11,PilipczukP12}, or thinness in terms of treewidth, like in the metaresult of Fomin et al.~\cite{FominTV13}. In this work we were dealing with graph classes which inherently allow existence of large cliques, and thus a new set of tools was needed. Shortly speaking, the crux of our approach is to use existence of balanced clique separators in chordal graphs to apply the \twotable trick of Schroeppel and Shamir \cite{SchroeppelS81-A}. However, this application needed to be preceeded by a long and technical preparation of the instance at hand.

Clearly, the most important research direction stemming from our work is further investigation of Hypothesis~\ref{conj:main}. Since we believe that the fully general statement might turn out to be either false or very hard to prove, we propose some relaxations that can be more approachable. 

Firstly, following the approach of Fomin et al.~\cite{FominTV13} one could require the graph class $\gclass$ to be moreover definable in some logical formalism, for example in $\text{CMSO}_2$ (monadic second-order logic with modulo predicates and quantification over edge subsets) or $\text{CMSO}_1$ (the same as $\text{CMSO}_2$, but without quantification over edge subsets). It can be easily seen that both chordal and interval graphs are definable in $\text{CMSO}_1$ by testing existence of any of the forbidden induced subgraphs. We have two concrete examples of hereditary, polynomial-time recognizable, and $\text{CMSO}_1$-definable graph classes for which we do not know any algorithm faster than $2^n$:
\begin{itemize}
\item {\bf{Perfect graphs}} can be defined as graphs which do not contain an odd hole nor an odd anti-hole; this result is known as the Strong Perfect Graph Theorem~\cite{strongperfect}. Perfect graphs are hereditary, polynomial-time recognizable~\cite{recogperfect}, and containing an odd hole or an odd anti-hole can be easily expressed in $\text{CMSO}_1$.
\item {\bf{Strongly chordal graphs}} are chordal graphs that moreover exclude $\ell${\em{-suns}} for $\ell\geq 3$ as induced subgraphs; we refer to~\cite{brandstadt1999graph} for a broader discussion of this graph class. They are also hereditary, polynomial-time recognizable~\cite{brandstadt1999graph}, and definable in $\text{CMSO}_1$. The reason why they do not fall under the regime of Theorem~\ref{thm:chord_plus} is that $\ell$-suns contain arbitrary large cliques and thus Property~(2) is not satisfied.
\end{itemize}

Secondly, one could impose some structural properties on the set of forbidden induced subgraphs $\F_\gclass$. One obvious relaxation, already used in Property~(2), is requiring that all the graphs from $\F_\gclass$ are connected, or equivalently that $\gclass$ is closed under taking disjoint union. More restrictions on the graphs from $\F_\gclass$ can be further imposed. For instance, it would be interesting to see if requiring that all the graphs from $\F_\gclass$ have treewidth bounded by some constant could help in breaking the $2^n$ barrier; note that this subsumes both the case of chordal and of interval graphs.

Finally, one could deviate from the precise statement of Hypothesis~\ref{conj:main} and replace the condition of being hereditary with connectivity. That is, we would like to find a maximum induced {\emph{connected}} graph belonging to $\gclass$. Of course, our approach fails since the connectivity requirements are not hereditary, and thus Property~(1) is not satisfied. Say, can a maximum induced \emph{connected} chordal subgraph be found faster than $2^n$?

\end{document}